\documentclass[runningheads,orivec]{llncs}
\usepackage{varwidth}
\usepackage{hhline}
\usepackage{arydshln}

\usepackage{wrapfig}
\usepackage{float} 

\usepackage[utf8]{inputenc}
\usepackage[english]{babel}

\usepackage{graphicx}
\usepackage{transparent} 
\usepackage{makecell} 
\usepackage{booktabs}
\usepackage{array}

\usepackage{tikz} 
\usetikzlibrary{automata, positioning, arrows, calc}
\usepackage{float}
\graphicspath{{Images/}} 

\usepackage{silence} 
\usepackage{subcaption}
\usepackage{caption} 
\usepackage[normalem]{ulem} 

\usepackage{amssymb}
\usepackage{amsmath}
\usepackage[overload]{empheq} 

\usepackage{tabularx}
\usepackage{longtable} 
\usepackage{colortbl}
\usepackage{multirow}
\usepackage[ruled, vlined, linesnumbered]{algorithm2e}

\usepackage[
colorlinks=true,
linkcolor=black,
anchorcolor=black,
citecolor=black,
filecolor=black,
menucolor=black,
runcolor=black,
urlcolor=black]{hyperref} 

\usepackage{appendix}

\usepackage[inline]{enumitem}

\usepackage{lineno}

\usepackage{ifthen}
\usepackage{orcidlink}
\usepackage{pdfpages}
\usepackage{comment} 
\usepackage{tcolorbox} 
\usepackage{pgfplots}
\pgfplotsset{compat=1.18}
\usepackage{etoolbox}

\usepackage{acronym}
\usepackage{dirtree}

\usepackage[textwidth=3.5cm,textsize=footnotesize]{todonotes}

\newcommand{\todoAM}[2]{\todo[color=red!50,#1]{\textbf{AM:}#2}}

\newcommand{\Red}[1]{{\color{black}#1}}

\newcommand{\resetClock}[1]{\mathbf{#1 \boldsymbol{:=} 0}}

\newcommand{\reg}{\mathcal{R}}
\newcommand{\breg}{\boldsymbol{R}}

\newcommand{\rega}{\mathcal{R}_{\ca}}
\newcommand{\regb}{\mathcal{R}_{\cb}}
\newcommand{\regc}{\mathcal{R}_{\cc}}
\newcommand{\regu}{\mathcal{R}_{\cu}}

\newcommand{\ca}{Z}
\newcommand{\cb}{P}
\newcommand{\cc}{M}
\newcommand{\cu}{U}

\newcommand{\loc}{q}

\newcommand{\cmax}{c_m}

\newcommand{\act}{\mathit{Act}}

\newcommand{\aut}{\mathcal{A}}

\newcommand{\transta}{\mathit{TS}(\aut)}

\newcommand{\ackc}{X_{\aut}}

\newcommand{\valZ}{v[Y \leftarrow 0]}

\newcommand{\eval}{\mathit{Eval}_{\aut}}

\newcommand{\frc}{\mathit{fr}}

\newcommand{\lrf}[1]{\lfloor #1 \rfloor}

\newcommand{\res}{\mathit{res}}

\newcommand{\allclocks}[1]{\mathbf{X}_{#1}}

\newcommand{\invariant}{\mathit{Inv}}

\newcommand{\lidx}{\boldsymbol{\ell}}
\newcommand{\ridx}{\mathfrak{r}}

\newcommand{\toolname}{\textsc{Tarzan}}

\newcommand{\boolea}{\texttt{boolean}}
\newcommand{\flower}{\texttt{flower}}
\newcommand{\gates}{\texttt{gates}}
\newcommand{\ring}{\texttt{ring}}
\newcommand{\exSITH}{\texttt{exSITH}}
\newcommand{\csma}{\texttt{CSMA/CD}}
\newcommand{\pagerank}{\texttt{pagerank}}
\newcommand{\fischer}{\texttt{fischer}}
\newcommand{\medical}{\texttt{medicalWorkflow}}
\newcommand{\mpeg}{\texttt{mpeg2}}
\newcommand{\lynch}{\texttt{lynch}}
\newcommand{\train}{\texttt{trainAHV93}}
\newcommand{\bridge}{\texttt{bridge}}
\newcommand{\latch}{\texttt{latch}}
\newcommand{\maler}{\texttt{maler}}
\newcommand{\rcp}{\texttt{rcp}}
\newcommand{\soldiers}{\texttt{soldiers}}
\newcommand{\srlatch}{\texttt{srLatch}}
\newcommand{\simple}{\texttt{simple}}
\newcommand{\andor}{\texttt{andOrOriginal}}

\acrodef{ta}[TA]{Timed Automaton}
\acrodefplural{ta}[TA]{Timed Automata}
\acrodef{tarena}[TAr]{Timed Arena}
\acrodefplural{tarena}[TArs]{Timed Arenas}
\acrodef{tg}[TG]{Timed Game}
\acrodefplural{tg}[TGs]{Timed Games}
\acrodef{ltl}[LTL]{Linear Temporal Logic}
\acrodef{cltloc}[CLTLoc]{Constraint Linear Temporal Logic over clocks}
\acrodef{cltl}[CLTL]{Constraint LTL}
\acrodef{ctl}[CTL]{Computation Tree Logic}
\acrodef{tctl}[TCTL]{Timed Computation Tree Logic}
\acrodef{tol}[TOL]{Timed Obstruction Logic}
\acrodef{rts}[RTS]{Region Transition System}
\acrodefplural{rts}[RTSs]{Region Transition Systems}
\acrodef{gba}[GBA]{Generalized Büchi Automaton}
\acrodefplural{gba}[GBA]{Generalized Büchi Automata}
\acrodef{dbm}[DBM]{Difference Bound Matrix}
\acrodefplural{dbm}[DBMs]{Difference Bound Matrices}
\acrodef{bdd}[BDD]{Binary Decision Diagram}
\acrodefplural{bdd}[BDDs]{Binary Decision Diagrams}

\acrodef{bfs}[BFS]{breadth-first}
\acrodef{dfs}[DFS]{depth-first}
\acrodef{tarzan}[TARZAN]{Timed Automata Region and Zone library for real-time systems ANalysis}
\acrodef{dsl}[DSL]{Domain-Specific Language}
\acrodef{vt}[VT]{Verification Time}
\acrodef{et}[ET]{Execution Time}

\AtBeginDocument{

\newcommand*{\myeqref}[2][Equation~]{%
  \hyperref[{#2}]{#1(\ref*{#2})}%
}
\def\equationautorefname#1#2\null{%
  Equation#1(#2\null)%
}
}

\usepackage{wrapfig}

\newtoggle{EXTENDED_VERSION}
\toggletrue{EXTENDED_VERSION}

\begin{document}

\iftoggle{EXTENDED_VERSION}{
\title{\textnormal{\toolname{}}: A Region-Based Library for Forward and Backward Reachability of Timed Automata (Extended Version)\thanks{The experimental results presented in this paper are reproducible using the artifact available on Zenodo~\cite{manini_2026_tarzan}.}}}
{\title{\textnormal{\toolname{}}: A Region-Based Library for Forward and Backward Reachability of Timed Automata\thanks{The experimental results presented in this paper are reproducible using the artifact available on Zenodo~\cite{manini_2026_tarzan}.}}}

\author{
Andrea~Manini\orcidID{0009-0009-6103-9244}
\and 
Matteo~Rossi\orcidID{0000-0002-9193-9560}
\and 
Pierluigi~{San~Pietro}\orcidID{0000-0002-2437-8716}
}
\authorrunning{A. Manini \and M. Rossi\and P. {San Pietro}} 
\institute{Politecnico di Milano \\ 
\email{firstname.lastname@polimi.it}}
\titlerunning{\toolname{}: A Region-Based Library for Timed Automata}

\maketitle

\begin{abstract}
The zone abstraction, widely adopted for its notable practical efficiency, is the de facto standard in the verification of Timed Automata (TA).
Nonetheless, region-based abstractions have been shown to outperform zones in specific subclasses of TA.
To complement and support mature zone-based tools, we introduce \toolname{}, a C\texttt{++} region-based verification library for TA.
The algorithms implemented in \toolname{} use a novel region abstraction that tracks the order in which clocks become unbounded.
This additional ordering induces a finer partitioning of the state space, enabling backward algorithms to avoid the combinatorial explosion associated with enumerating all ordered partitions of unbounded clocks, when computing immediate delay predecessor regions.

We validate \toolname{} by comparing forward reachability results against the state-of-the-art tools Uppaal and TChecker.
The experiments confirm that zones excel when TA have large constants and strict guards.
In contrast, \toolname{} exhibits superior performance on closed TA and TA with punctual guards.
Finally, we demonstrate the efficacy of our backward algorithms, establishing a foundation for region-based analysis in domains like Timed Games, where backward exploration is essential.

\end{abstract}

\acresetall

\section{Introduction}
\label{sec:introduction}
Formally verifying the behavioral correctness of real-time systems (also including distributed systems) is essential 
for ensuring that critical properties hold.
Since their introduction, \acp{ta} have become one of the most widely adopted formalisms for modeling and analyzing such systems~\cite{distributed2,distributed}.

A key challenge in the verification of \ac{ta} is their infinite state space, addressed by the \emph{region}~\cite{ALURDILL} abstraction, which provides a finite representation enabling algorithmic verification. Alternatives include \emph{Binary Decision Diagrams}~\cite{Wang2004} and \emph{zones}~\cite{HENZINGER1994193}, the latter requiring \emph{normalization} for state space finiteness~\cite{Bouyer2004}. 
Zones, typically implemented using \emph{\acp{dbm}}~\cite{DBM}, are widely used in tools such as Uppaal~\cite{UppaalTutorial}, TChecker~\cite{tcheckerpaper}, Kronos~\cite{KRONOS}, and Synthia~\cite{SYNTHIA}.


\Red{Zones have shown to be highly efficient in practice and, over the last decades, have become the de facto standard for the formal verification of \acp{ta}, although they have well-known limitations~\cite{quantitativebouyer,Lehmann2023,Yovine1998}.
Yet, prior work has demonstrated that for the class of \emph{closed} \acp{ta}, \emph{i.e.}, \ac{ta} containing only non-strict guards, region-based abstractions may outperform state-of-the-art zone-based tools~\cite{J_rgensen_2012}.}

\Red{The promising performance of region-based abstractions on appropriate subclasses of \acp{ta} motivated the development of \toolname{} (Timed Automata Region and Zone library for real-time systems ANalysis)~\cite{tarzan2025}, a C\texttt{++}20 library designed to support and complement state-of-the-art zone-based tools for the formal verification of \acp{ta}.}
Currently, \toolname{} supports only region-based reachability of \ac{ta}; optimizations leveraging zones are deferred to future work.
The implementation of backward reachability algorithms enables the extension of \toolname{} to domains in which backward analysis is essential, such as Timed Games~\cite{PnueliTG}.

\Red{When compared against zone-based state-of-the-art tools including Uppaal and TChecker, \toolname{} exhibits superior performance on \ac{ta} and networks of \ac{ta} featuring \emph{punctual} guards, \emph{i.e.}, guards with equality constraints, and on closed \ac{ta}, even with large constants or many clocks.
However, the size of constants remains a significant bottleneck in other classes of \ac{ta} (\emph{e.g.}, zones excel in \ac{ta} with strict guards).
Moreover, since \toolname{} only supports reachability analysis, using forward reachability to verify safety properties for unreachable regions requires exploring the entire reachable state space.
We show in this paper that backward reachability can enhance the verification of safety properties.}

\textbf{Related works}~Abstractions closely related to regions have already been proposed to better capture the semantics of \ac{ta}.
In~\cite{TowardsUnbounded}, regions are encoded as Boolean formulae via discrete interpretations.
The \emph{slot-based} abstraction proposed in~\cite{2024parameterized} groups regions by time using integer-bounded intervals over a global clock.
A data structure, called \emph{time-dart}, is introduced in~\cite{J_rgensen_2012} to compactly encode configurations of \ac{ta} and their time successors without explicit enumeration.
In~\cite{Bouyer2008}, regions of bounded \ac{ta} are represented as tuples combining integer clock values and a partition of clocks that reflects the ordering of their fractional parts.
The region abstraction is extended in~\cite{math12244008} to \emph{durational-action} \ac{ta}.

To the best of our knowledge, no region-based library for \acp{ta} verification, nor any libraries or tools based on the aforementioned abstractions, currently exist. 
Consequently, we compare \toolname{} against the state-of-the-art tools Uppaal and TChecker.
Existing libraries that explicitly support the analysis of \ac{ta} using \ac{dbm}-based zone representations include~\cite{PyDBM,UDBM}.
Other libraries offer more general polyhedral abstractions beyond zones, such as those provided by~\cite{apron,PPL}.

\textbf{Contributions}~The main contribution of this work is two-fold:
(i) we first introduce a refined region representation that tracks the order in which clocks become unbounded.
While this representation may increase the total number of regions by inducing a finer partitioning of the state space, a major advantage is that it guarantees at most three immediate delay predecessors for any region;
(ii) building on this new representation, we develop and implement algorithms in \toolname{} for computing delay and discrete successors and predecessors of regions.
\Red{Existing tools may integrate \toolname{} to enhance their verification capabilities.}

This paper is organized as follows:
\autoref{sec:theoretical_background} introduces the necessary theoretical background.
\autoref{sec:data_structure} presents the proposed new region representation, from which a classification of regions is derived.
\autoref{sec:delay_and_discrete_predecessors} details the algorithms (and their pseudocode) to compute discrete successors and immediate delay predecessors of regions using this representation.
\autoref{sec:implementation_and_experimental_evaluation} presents an overview of \toolname{} and provides an empirical evaluation of forward and backward reachability.
\autoref{sec:conclusion} concludes and outlines future developments.
An appendix provides additional algorithms, proofs, and full experimental results.

\section{Theoretical Background}
\label{sec:theoretical_background}
The timing behavior of real-time systems is naturally modeled by the formalism of TA, which is obtained by extending classical finite-state automata with real-valued clocks. In the following, we use the term action to refer to the events that a real-time system can exhibit.
\acp{ta} are defined as follows~\cite{ALURDILL}:

\begin{definition}
\label{def:tadef}
{A \emph{\acl{ta}} $\aut$ is a tuple $\aut = (\act, Q, Q_{0}, \ackc, T, \invariant)$, where
$\act$ is a finite set of actions,
$Q$ is a finite set of locations,
$Q_{0} \subseteq Q$ is a set of initial locations,
$\ackc$ is a finite set of clocks,
$T \subseteq Q \times \act \times \Gamma(\ackc) \times 2^{\ackc} \times Q$ is a transition relation, and
$\invariant : Q \rightarrow \Gamma(\ackc)$ is an invariant-assignment function.}
\end{definition}

Let $\aut$ be a \ac{ta}.
Clocks are special variables that can only be reset or compared against non-negative integer constants. 
The value of all clocks is initialized to zero and grows with derivative 1 until a reset occurs.
When a clock is reset, its value becomes zero; after a reset, it starts increasing again. 
$\Gamma(\ackc)$ is the set of \emph{clock guards}, \emph{i.e.}, conditions
{over clocks} that must be satisfied for a transition to fire.
Clock guards are defined by the following grammar: 
$\gamma := 
\xi
\; | \;
\gamma \land \gamma$, 
where 
$\xi := 
x \sim c$ 
is a \emph{clock constraint}, $x \in \ackc$, $c \in \mathbb{N}$, and $\sim \; \in \{ \le, <, =, >, \ge \}$.
We write $\xi \in \gamma$ to indicate that the clock constraint $\xi$ occurs within the guard $\gamma$.
For instance, if $\gamma := x_1 < c \land x_2 \ge c$, then $\xi_1 := x_1 < c$ and $\xi_2 := x_2 \ge c$, where $\xi_1, \xi_2 \in \gamma$. 
The powerset $2^{\ackc}$ indicates that a transition may reset a {subset} of the clocks.
We denote by $\cmax \in \mathbb{N}$ the maximum constant appearing in $\aut$ and assume, without loss of generality, that it is the same for all clocks of $\aut$.

A pair $(q, v)$ is a \emph{configuration} of a \ac{ta} $\aut$, where $q \in Q$ and $v : \ackc \rightarrow \mathbb{R}_{\ge 0}$ is a \emph{clock valuation} function.
We denote by $\eval$ the set of all clock valuations of $\aut$.
Given a subset $Y \subseteq \ackc$ of clocks, we denote by $\valZ$ the clock valuation defined as $\valZ(x) = 0$ if $x \in Y$ and $\valZ(x) = v(x)$ otherwise.
Given a constant $\delta \in \mathbb{R}_{\ge 0}$ and a clock $x \in \ackc$, the value $(v + \delta)(x)$ is defined as $v(x) + \delta.$

{The \emph{transition system} $\transta = (S, \act \cup \mathbb{R}_{\ge 0}, I, \hookrightarrow)$ captures the semantics of a \ac{ta} $\aut$, where 
$S = Q \times \eval$,
$I = \{ (q, v) \; \vert \; q \in Q_0 \land \forall x \in \ackc : v(x) = 0 \}$, and 
$\hookrightarrow$ is the transition relation containing:
(i) \emph{discrete} transitions $((q,v), a, (q',v'))$ if there exists a transition $(q, a, \gamma, Y, q') \in T$ in $\aut$ such that $v$ satisfies both $\gamma$ and $\invariant(q)$, and $v'(x) = \valZ(x)$ satisfies $\invariant(q')$, where $a \in \act$, for all $x \in \ackc$, and
(ii) \emph{delay} transitions $((q,v), \delta, (q,v'))$ if there exists a constant $\delta \in \mathbb{R}_{\ge 0}$ such that $v'(x) = (v + \delta)(x)$ satisfies $\invariant(q)$, for all $x \in \ackc$.
A full description of the standard transition system semantics of \ac{ta} is covered in~\cite[Chapter 9]{Baier}.}

The infinite state space of \acp{ta} can be conveniently analyzed by resorting to a finite abstraction, known as the \emph{region} abstraction~\cite{ALURDILL}.
For any real number $m \in \mathbb{R}$, we denote its integral and fractional parts by $\lrf{m}$ and $\frc(m)$, respectively.

\begin{definition}[Clock equivalence relation]
\label{def:clock_equivalence}
Let $\aut\! =\! (\act, Q, Q_{0}, \ackc, T, \invariant)$ be a \ac{ta}.
Two clock valuations $v, v' \in \eval$ are called \emph{clock-equivalent}, denoted by $v \cong v'$, if, and only if, it either holds that:
\emph{(i)} $v(x) > \cmax$ and $v'(x) > \cmax$, for all $x \in \ackc$, or 
\emph{(ii)} 
for any $x, y \in \ackc$, with $v(x), v'(x) \le \cmax$ and $v(y), v'(y) \le \cmax$,
it holds that:
\emph{(ii$_a$)} $\lrf{v(x)} = \lrf{v'(x)}$ and $\frc(v(x)) = 0$ $\Leftrightarrow$ $\frc(v'(x)) = 0$, and
\emph{(ii$_b$)} $\frc(v(x)) \sim \frc(v(y))$ $\Leftrightarrow$ $\frc(v'(x)) \sim \frc(v'(y))$, where $\sim \; \in \{<, =\}$.
\end{definition}

\begin{definition}[Region]
Let $\aut$ be a \ac{ta} and $s = ( q, v )$ be a configuration of $\aut$.
The \emph{clock region} $[v]$ of $v \in \eval$ is the set of all clock valuations equivalent to $v$ under the relation of \autoref{def:clock_equivalence}.
The \emph{state region} $[s]$ of $s$ is: $[s] = ( q, [v] ) = \{ ( q, v' ) \mid v' \in [v] \}$;
$[s]$ is \emph{initial} iff $q \in Q_0$ and $\forall v' \in [v], \forall x \in \ackc : v'(x) = 0$.
\end{definition}

In the remainder of this work, we focus on state regions, referred to simply as \emph{regions}.
\Red{Let $\aut$ be a \ac{ta} and $[s] = ( q, [v] )$ a region of $\aut$.
A region $[s'] = ( q', [v'] )$ is an \emph{immediate delay successor} of $[s]$ if either:
(i) $q = q' \land [v] = [v']$ and, for all $\bar{v} \in [v]$ and $x \in \ackc$, $\bar{v}(x) > \cmax$, or
(ii) $q = q' \land [v] \neq [v']$ and 
for all $\bar{v} \in [v]$ 
there exists $\delta \in \mathbb{R}_{> 0}$
such that $\bar{v} + \delta \in [v'] 
\land 
\forall \; 0 \le \delta' \le \delta : \bar{v} + \delta' \in ([v] \cup [v'])$.
A region $[s']$ is a \emph{delay successor} of $[s]$ if there exists a finite sequence of regions $[s_0], [s_1], \dots, [s_n]$ such that $[s_0] = [s]$, $[s_n] = [s']$, and each $[s_{i + 1}]$ is an immediate delay successor of $[s_i]$, for $0 \le i < n$.
A region $[s']$ is a \emph{discrete successor} of $[s]$ if $[v'] = [\valZ]$, where $Y \subseteq \ackc$ is the subset of clocks reset over the discrete transition from $[s]$ to $[s']$.
Symmetric definitions can be given for \emph{immediate delay predecessors}, \emph{delay predecessors}, and \emph{discrete predecessors}.}

\section{Representing Regions with Ordered Clock Partitions}
\label{sec:data_structure}
We introduce a new region representation that captures the order in which clocks become unbounded (exceeding $\cmax$).
For bounded clocks (with value up to $\cmax$), only the ordering of their fractional part is preserved.
\Red{Our representation builds on~\cite{Bouyer2008}, which focuses on \acp{ta} with bounded clocks.
By also handling unbounded clocks, we significantly extend the applicability of region-based state space exploration to a broader class of \acp{ta}.}
{While forward exploration remains unaffected, our representation influences backward exploration (\autoref{thm:bounded_d_pred} and \autoref{thr:discrete_predecessor_complexity}).}

Let $\aut = (\act, Q, Q_{0}, \ackc, T, \invariant)$ be a \ac{ta}, $q \in Q$.
Furthermore, let $\lidx, \ridx \in \mathbb{N}$ be two integers such that $\lidx + \ridx \le \lvert \ackc \rvert$.
We denote by $X_i \subseteq \ackc$ a subset of clocks of $\aut$, for $-\lidx \le i \le \ridx$.
\Red{We require that these subsets induce a partition of $\ackc$, \emph{i.e.}, $\ackc = \bigcup_{i} X_i$ and, for all indices $-\lidx \le i,j \le \ridx$ such that $i \neq j$, $X_i \cap X_j = \varnothing$.}
Finally, we introduce a function $h : \ackc \rightarrow \{0, \dots, \cmax \}$ that assigns an integer value to each clock in $\ackc$.
A region of $\aut$ can be represented as follows:
\begin{equation}
\label{eq:region}
    \reg := \{ \loc, h,  X_{-\lidx}, X_{-(\lidx-1)}, \dots, X_{-1}, X_{0}, X_1, \dots, X_{(\ridx-1)}, X_{\ridx} \}.
\end{equation}

Set $X_0$ is always present even if empty; the other sets are omitted when empty.
The sets $X_{-\lidx}, \dots, X_{-1}$ represent unbounded clocks, capturing the ordering in which these clocks become unbounded
(\emph{e.g.}, all clocks in $X_{-1}$ become unbounded simultaneously and prior to all clocks in $X_{-2}$).  
Conversely, the sets $X_1, \dots, X_{\ridx}$ represent bounded clocks, capturing the ordering of their fractional parts (\emph{e.g.}, the fractional parts of all clocks in $X_{1}$ are equal and less than the fractional parts of all clocks in $X_{2}$).
$X_0$ contains bounded clocks that have no fractional part (we also say these clocks \emph{are in the unit}).
The integers $\lidx$ and $\ridx$ denote the number of sets containing unbounded and bounded clocks, respectively ($\ridx$ does not consider $X_0$).  
Their values are automatically updated whenever a set is inserted into or removed from $\reg$.
We use a subscript to disambiguate between different regions: for instance, $q_{\reg}$ refers to the location $q$ of $\reg$, $X_{\reg,0}$ to the set $X_0$ of $\reg$, and $\lidx_{\reg}$ indicates the number of sets containing unbounded clocks in $\reg$.

\Red{For conciseness, we also refer to the elements of a region $\reg$ as its \emph{structure}.
Two regions $\reg,\reg'$ are \emph{structurally equivalent} if the following holds: 
$q_{\reg} = q_{\reg'} 
\land 
\lidx_{\reg} = \lidx_{\reg'}
\land
\ridx_{\reg} = \ridx_{\reg'}$
and, for all $-\lidx_{\reg} \le i \le \ridx_{\reg}$, $X_{\reg, i} = X_{\reg', i}$.}

\begin{example}
\label{ex:exampleregion}
    Let $\aut$ be a \ac{ta} with clocks $\ackc = \{ x, y, z, w \}$ and $\cmax = 5$.
    Consider the region defined as:
    $\{ (x,y,z,w) \; \vert \; 2 < x < 3, 2 < y < 3, z > 5, w > 5, x = y \}$,
    \emph{i.e.}, a region (assumed to be in $q \in Q$) in which $x$ and $y$ are bounded and equal (they have the same fractional part),
    while clocks $z$ and $w$ are unbounded (we assume that $z$ became unbounded before $w$).
    The region is represented using \autoref{eq:region} as:
    $
    \reg = \{ q, h(x) = 2, h(y) = 2, h(z) = 5, h(w) = 5, X_{-2}, X_{-1}, X_0, X_1 \},
    $
    where $X_{-2} = \{ w \}, X_{-1} = \{ z \}, X_0 = \varnothing, X_1 = \{ x, y \}, \lidx = 2$, and $\ridx = 1$.
    
\end{example}

\begin{definition}[Region class]
\label{def:region_classes}
    If the order in which clocks become unbounded is tracked, regions can be classified in four distinct \emph{classes}---\ca{} (Zero), \cb{} (Positive), \cc{} (Mixed), and \cu{} (Unbounded)---defined as:
    \emph{(i)} Class \ca:~$
    \lidx \ge 0 \land \ridx = 0 \land X_0 \neq \varnothing$;
    \emph{(ii)} Class \cb:~$\lidx \ge 0 \land \ridx > 0 \land X_0 = \varnothing$; 
    \emph{(iii)} Class \cc:~$\lidx \ge 0 \land \ridx > 0 \land X_0 \neq \varnothing$;
    and \emph{(iv)} Class \cu:~$\lidx > 0 \land \ridx = 0 \land X_0 = \varnothing$.
\end{definition}

Class $\ca{}$ denotes regions in which some clocks may be unbounded, while all bounded clocks have no fractional part and at least one clock must be bounded.
Class $\cb{}$ denotes regions where some clocks may be unbounded, and all bounded clocks have a fractional part greater than zero (again, at least one clock must be bounded).
Class $\cc{}$ denotes regions where some clocks may be unbounded and the bounded clocks are partitioned into those with no fractional part and those with a fractional part greater than zero (at least two clocks must be bounded, one with no fractional part and one with a fractional part greater than zero).
Finally, class $\cu{}$ denotes regions in which all clocks are unbounded.
Notice that the region presented in \autoref{ex:exampleregion} belongs to class $\cb$.
This classification into different classes lies at the core of the algorithms that compute the immediate delay successor and immediate delay predecessors of a given region.

\textbf{Transitioning between classes}~Let $\rega$, $\regb$, $\regc$, and $\regu$ be regions of classes $\ca$, $\cb$, $\cc$, and $\cu$, respectively. 
Transitions from a region to its immediate delay successor
induce the following deterministic transitions between classes:
\begin{itemize}[topsep=0pt, partopsep=0pt, parsep=0pt, itemsep=0pt]
    \item \emph{From $\rega$ to $\regb$}: at least one clock $x \in X_{\rega,0}$ must increase such that its fractional part becomes greater than zero, while remaining bounded. 
    Other clocks in $X_{\rega,0}$ can either become unbounded or remain bounded and grow at the same rate of $x$, \emph{i.e.}, all bounded clocks have the same fractional part.
    
    \item \emph{From $\rega$ to $\regu$}: all clocks in $X_{\rega,0}$ become unbounded simultaneously.
    
    \item \emph{From $\regb$ to $\rega$}: it must hold that $\ridx_{\regb} = 1$.
    All clocks in $X_{\regb, 1}$
    must reach the next unit (they will be inserted into $X_{\rega,0}$) and their integer value is incremented by 1, as they have a fractional part greater than zero.

    \item \emph{From $\regb$ to $\regc$}: it must hold that $\ridx_{\regb} > 1$.
    All clocks in $X_{\regb, \ridx_{\regb}}$ must reach the next unit (they will be inserted into $X_{\regc,0}$) and their integer value is incremented by 1, \Red{as they have the biggest fractional part.}

    \item \emph{From $\regc$ to $\regb$}: clocks in $X_{\regc,0}$ can become unbounded or their fractional part can become greater than zero while remaining bounded.
\end{itemize}
The missing transitions between classes are explained as follows.
By \autoref{def:region_classes}, transitioning to $\regu$ is only possible from $\rega$, since $\ridx_{\regu} = 0 \land X_{\regu, 0} = \varnothing$ must hold.
It is not possible to transition from $\rega$ to $\regc$, since all clocks will exit from $X_{\rega, 0}$.
An immediate delay successor of $\regc$ has at least one clock with a nonzero fractional part, making it impossible to transition from $\regc$ to $\rega$.

Transitions between classes contain no self-loops; the immediate delay successor of any region is always a distinct region.
$\regu$ does not have delay successors, as clocks can increase indefinitely once unbounded.
The structure of regions influences transitioning between classes;
\emph{e.g.}, from $\regc$ with $\ridx_{\regc} = 3$, several cycles between $\regb$ and $\regc$ are required before reaching $\rega$ via delay transitions; this becomes possible once $\ridx_{\regc}$ decreases to $1$ as some clocks become unbounded.

\section{Discrete Successors and Immediate Delay Predecessors}
\label{sec:delay_and_discrete_predecessors}
We present algorithms computing discrete successors and immediate delay predecessors regions.
For brevity, immediate delay successors and discrete predecessors algorithms are deferred to the Appendix in \autoref{app:foreword_delay_successors_algorithm} and \autoref{app:discrete_predecessors_algorithm}.

\textbf{Discrete successors}~\autoref{alg:finddiscretesucc}, whose time complexity is $O(\lvert T \rvert \cdot \lvert \ackc \rvert)$, computes the discrete successors of a given region $\reg$.
In particular, a set storing the resulting discrete successors is initialized in Line~\ref{line:fdsuc:initres} and returned in Line~\ref{line:fdsuc:returnres}.  
The loop in Line~\ref{line:fdsuc:foreacht} iterates over all transitions $t$ of the original \ac{ta} that exit from the current location $q_{\reg}$ of $\reg$, identifying all possible discrete successors of $\reg$.  
If the guard $\gamma$ of $t$ is satisfied by combining the integer values of the clocks of $\reg$, as returned by \( h_{\reg} \), with their fractional parts (Line~\ref{line:fdsuc:isenabled}), a copy $\breg$ of $\reg$ is created to represent the successor region (Line~\ref{line:fdsuc:copyr}).  
The location of $\breg$ is updated to the target location $q$ of $t$ (Line~\ref{line:fdsuc:setloc}).  
The loop in Line~\ref{line:fdsuc:idxloop} iterates over all integers $-\lidx_{\breg} \le i \le \ridx_{\breg}$, enabling the inner loop in Line~\ref{line:fdsuc:setloop} to examine each clock set of $\breg$.  
If a clock $x$ is reset in $t$,
its integer value is set to zero (Line~\ref{line:fdsuc:setxtozero}).  
Moreover, if $x$ is either unbounded or has a nonzero fractional part (Line~\ref{line:fdsuc:hasfracpart}), it is inserted into $X_{\breg, 0}$ and removed from its previous clock set (Line~\ref{line:fdsuc:insertremovex}).  
Finally, the resulting region $\breg$ is added to the result set in Line~\ref{line:fdsuc:addtores}.

\begin{algorithm}[thb]
    \caption{\texttt{find-discrete-successors}$(\reg)$}
    \label{alg:finddiscretesucc}
    
    \KwData{$\reg$: a region.}
    \KwResult{The discrete successors of $\reg$.}
    
    \Begin() {
        $\mathit{res} \gets \varnothing$\; \label{line:fdsuc:initres}
        \ForEach{$t = (q_{\reg}, a, \gamma, Y, q) \in T$}{ \label{line:fdsuc:foreacht}
            \If{$\gamma$ is satisfied by $\reg$}{ \label{line:fdsuc:isenabled}
                $\breg \gets$ copy of $\reg$\; \label{line:fdsuc:copyr}
                $q_{\breg} \gets q$\; \label{line:fdsuc:setloc}
                \For{$i = -\lidx_{\breg}$ \KwTo $\ridx_{\breg}$}{ \label{line:fdsuc:idxloop}
                    \ForEach{$x \in X_{\breg,i}$ such that $x \in Y$}{ \label{line:fdsuc:setloop}
                            $h_{\breg}(x) \gets 0$\; \label{line:fdsuc:setxtozero}
                            \If{$i \neq 0$}{ \label{line:fdsuc:hasfracpart}
                                $X_{\breg,0} \gets X_{\breg,0} \cup \{x\}$;
                                $X_{\breg,i} \gets X_{\breg,i} \setminus \{x\}$\; \label{line:fdsuc:insertremovex}
                        }
                    }
                }
                $\mathit{res} \gets \mathit{res} \cup \{ \breg \}$\; \label{line:fdsuc:addtores}
            }
        }
        \Return $\mathit{res}$\; \label{line:fdsuc:returnres}
    }
\end{algorithm}

\begin{algorithm}[t]
    \caption{\texttt{find-immediate-delay-predecessors}$(\reg)$}
    \label{alg:finddelaypred}
    
    \KwData{$\reg$: a region where no clock $x$ is such that: $h_{\reg}(x) = 0 \land x \in X_{\reg, 0}$.}
    \KwResult{The immediate delay predecessors of $\reg$.}
    \Begin() {
        $\res \gets \varnothing$\; \label{line:dp:declarereg}
        $\breg \gets$ copy of $\reg$\; \label{line:dp:copyr}
        \uIf{$X_{\reg,0} = \varnothing \land \ridx_{\reg} = 0$}{ \label{line:dp:ucheck}
            insert clocks of $X_{\breg, -\lidx_{\breg}}$ in $X_{\breg, 0}$ and remove $X_{\breg, -\lidx_{\breg}}$ from $\breg$\;
            \label{line:dp:handleuclass}
        }
        \uElseIf{$X_{\reg,0} \neq \varnothing$}{ \label{line:dp:accheck}
            $X_{\mathit{tmp}} \gets \varnothing$\; \label{line:dp:initxtmp}
            \ForEach{$x \in X_{\breg,0}$}{ \label{line:dp:foreachloopac}
                $h_{\breg}(x) \gets h_{\breg}(x) - 1$\; \label{line:dp:decreaseh}
                $X_{\mathit{tmp}} \gets X_{\mathit{tmp}} \cup \{ x \}$; 
                $X_{\breg,0} \gets X_{\breg,0} \setminus \{ x \}$\; \label{line:dp:insertandremovex}
            }
            insert $X_{\mathit{tmp}}$ to the right of $X_{\breg, \ridx_{\breg}}$\; \label{line:dp:insertxtmpright}
        }
        \Else{
        \label{line:dp:bcheck}
            insert clocks of $X_{\breg, 1}$ in $X_{\breg, 0}$ and remove $X_{\breg, 1}$ from $\breg$\; \label{line:dp:insertx1inx0}

            \If{$\lidx_{\reg} \ge 1$}{ \label{line:dp:checkunbound}
                $\breg_1 \gets$ copy of $\reg$\; \label{line:dp:copyinr1}
                insert clocks of $X_{\breg_1, -\lidx_{\breg_1}}$ in $X_{\breg_1, 0}$ and remove $X_{\breg_1, -\lidx_{\breg_1}}$ from $\breg_1$\; \label{line:dp:insertinr1}
                $\breg_2 \gets$ copy of $\breg_1$\; \label{line:dp:copyinr2}
                insert clocks of $X_{\breg_2, 1}$ into $X_{\breg_2, 0}$ and remove $X_{\breg_2, 1}$ from $\breg_2$\; \label{line:dp:insertinr2}
                $\res \gets \res \cup \{ \breg_1, \breg_2 \}$\; \label{line:dp:collectr1r2}
             }
        }
        \Return ($\res \cup \{ \breg \})$\; \label{line:dp:returnres}
    }
\end{algorithm}

\textbf{Immediate delay predecessors}~\Red{The following theorem (whose proof is shown in the Appendix, \autoref{thm:proof_of_bounded_d_pred}) establishes that, when tracking the order in which clocks become unbounded, the number of immediate delay predecessors of a given region does not exhibit exponential blow-up.}

\begin{theorem}
\label{thm:bounded_d_pred}
   If the order in which clocks become unbounded is tracked, the number of immediate delay predecessors of a given region is at most three.
\end{theorem}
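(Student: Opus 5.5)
We prove \autoref{thm:bounded_d_pred} by a case analysis on the class of the target region $\reg$ (\autoref{def:region_classes}), reading the class transitions of \autoref{sec:data_structure} backwards. For each class we determine which predecessor classes can enter it and argue that each such predecessor has a structure fixed by $\reg$. The argument should mirror \autoref{alg:finddelaypred}, so establishing the bound also establishes that the algorithm's output is complete.

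\textbf{Classes $\cu$, $\cc$, and $\ca$ with $X_0$ containing a clock of positive integer value.} If $\reg \in \cu$, the only class that can enter $\cu$ is $\ca$, through all clocks of $X_{0}$ becoming unbounded simultaneously. Because the order of unboundedness is recorded, exactly the clocks that became unbounded last are the ones in $X_{\reg,-\lidx}$. The predecessor is therefore uniquely determined: move $X_{\reg,-\lidx}$ back into $X_0$ with $h=\cmax$, which gives one predecessor. If $X_{\reg,0}\neq\varnothing$, then $\reg\in\ca$ or $\reg\in\cc$, and the only entering transitions are $\cb\to\ca$ and $\cb\to\cc$. In both, the clocks now in $X_0$ were the clocks with the largest fractional part, and each had integer value one less. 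The predecessor is obtained by decrementing $h$ on $X_0$ and appending these clocks as a new rightmost set. No other clock changes set or value, since no clock becomes unbounded on these transitions (they only reach an integer). This again gives exactly one predecessor. Regions in $X_0$ with $h=0$, which are excluded by the algorithm's hypothesis, have either no delay predecessor or only trivial ones, and are handled separately.

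\textbf{Class $\cb$.} Here $X_{\reg,0}=\varnothing$ and $\ridx>0$, and the only entering transitions are $\ca\to\cb$ and $\cc\to\cb$. In such a transition the clocks leaving $X_0$ either remain bounded, in which case they now have the smallest fractional part and form exactly $X_{\reg,1}$, or they become unbounded, in which case they form exactly the newest unbounded set $X_{\reg,-\lidx}$. The key observation is that the set of clocks that left $X_0$ is $A\cup B$, where $A\in\{\varnothing, X_{\reg,1}\}$ and $B\in\{\varnothing, X_{\reg,-\lidx}\}$. In particular, $B$ cannot be a proper subset of $X_{\reg,-\lidx}$, because clocks becoming unbounded at the same instant are placed in the same set. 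Likewise, $A$ cannot be a proper subset of $X_{\reg,1}$, because all clocks leaving $X_0$ while remaining bounded acquire the same fractional part, and no other bounded clock can share that minimal fractional part.

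The case $A=B=\varnothing$ is impossible, since some clock must leave $X_0$. This leaves at most three choices: $(X_{\reg,1},\varnothing)$, $(\varnothing,X_{\reg,-\lidx})$, and $(X_{\reg,1},X_{\reg,-\lidx})$. The last two are available only when $\lidx\ge1$. Each choice determines the predecessor uniquely: $h$ is unchanged for clocks returning from $X_{\reg,1}$, and $h=\cmax$ for clocks returning from the unbounded set. The remaining structure is copied from $\reg$. These three predecessors are exactly $\breg$, $\breg_1$, and $\breg_2$ in \autoref{alg:finddelaypred}.

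\textbf{Main obstacle.} The delicate step is the uniqueness claim in the $\cb$ case, namely that $B$ must be all of $X_{\reg,-\lidx}$ or empty. This is precisely where tracking the order of unboundedness matters. Without it, any subset of the unbounded clocks could have just crossed $\cmax$, and this freedom is what produces the combinatorial blow-up over ordered partitions. To make the step rigorous, we would argue from the definition of immediate delay successor along a single valuation $\bar v+\delta'$. Clocks in $X_0$ with $h=\cmax$ exceed $\cmax$ at the same instant, and the representation updates by appending them as a single new set $X_{-(\lidx+1)}$. Consequently, every clock that became unbounded on the transition must lie in the newest unbounded set, and every clock in that set must have become unbounded on the transition.
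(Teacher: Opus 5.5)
Your proof is correct and follows the paper's own argument: a case analysis on the class of the target region. Classes $\ca$, $\cc$, $\cu$ each yield a single predecessor, and class $\cb$ yields at most three, built from the newest unbounded set $X_{\reg,-\lidx}$ and the minimal-fraction set $X_{\reg,1}$. Your pair $(A,B)$ with $A\in\{\varnothing,X_{\reg,1}\}$ and $B\in\{\varnothing,X_{\reg,-\lidx}\}$ repackages the paper's sub-cases (predecessor in $\ca$ vs.\ $\cc$, and $\ridx_{\regb}=1$ vs.\ $\ridx_{\regb}>1$) into one statement, and you also justify why $B$ cannot be a proper subset of $X_{\reg,-\lidx}$, a point the paper takes for granted.
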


\Red{Indeed, if the order in which clocks become unbounded is not tracked, computing the immediate delay predecessors of regions with unbounded clocks would require enumerating all ordered partitions of the unbounded clocks, as every order in which the clocks become unbounded would then be admissible (regions without unbounded clocks would have a unique immediate delay predecessor).}

The immediate delay predecessors of a given region $\reg$ are computed by \autoref{alg:finddelaypred}, whose time complexity is $O(\lvert \ackc \rvert)$ (we refer to \emph{predecessors} in its description).
In particular, Line~\ref{line:dp:declarereg} initializes a set, returned in Line~\ref{line:dp:returnres}, storing the computed predecessors.
The algorithm proceeds by analyzing the class of $\reg$ to compute its predecessors by modifying a copy $\breg$ of $\reg$, created in Line~\ref{line:dp:copyr}.

If $\reg$ is of class $\cu$ (Line~\ref{line:dp:ucheck}), the predecessor is obtained by inserting set $X_{\breg, -\lidx_{\breg}}$, which contains the clocks that became unbounded while exiting from the unit, into $X_{\breg, 0}$.
Then, $X_{\breg, -\lidx_{\breg}}$ is removed from $\breg$ (Line~\ref{line:dp:handleuclass}).

For regions of class $\ca$ or class $\cc$ (Line~\ref{line:dp:accheck}),
an auxiliary set $X_{\mathit{tmp}}$ is initialized (Line~\ref{line:dp:initxtmp}).
The integer part of each clock $x \in X_{\breg, 0}$ (Line~\ref{line:dp:foreachloopac}) is decremented by 1 (Line~\ref{line:dp:decreaseh}), and $x$ is moved from $X_{\breg, 0}$ to $X_{\mathit{tmp}}$ (Line~\ref{line:dp:insertandremovex}). Since these clocks must have
the highest fractional part, $X_{\mathit{tmp}}$ is inserted to the right of $X_{\breg, \ridx_{\breg}}$ (Line~\ref{line:dp:insertxtmpright}).

If $\reg$ is of class $\cb$ (Line~\ref{line:dp:bcheck}), the algorithm considers both the cases where the predecessor is of class $\ca$ or class $\cc$. 
First, clocks in $X_{\breg, 1}$ (those that left the unit) are moved into $X_{\breg, 0}$ (Line~\ref{line:dp:insertx1inx0}). Then, Line~\ref{line:dp:checkunbound} checks whether unbounded clocks must also be handled.
If so, two additional regions are created.
The first, $\breg_1$, is created on Line~\ref{line:dp:copyinr1} as a copy of $\reg$ and updated on Line~\ref{line:dp:insertinr1} to represent the case in which all clocks with zero fractional part in the predecessor became unbounded.
After $\breg_1$ is updated, it is copied in $\breg_2$ (Line~\ref{line:dp:copyinr2}), which is modified on Line~\ref{line:dp:insertinr2} to capture the case in which only a subset of those clocks became unbounded.
Both $\breg_1$ and $\breg_2$ are collected on Line~\ref{line:dp:collectr1r2} in the result set.


\renewcommand{\arraystretch}{1}
\definecolor{lightgray}{gray}{0.95}
\begin{figure}[tb]
    \centering

    \begin{subfigure}[H]{0.54\textwidth}
        \centering
        \includegraphics[width=\linewidth]{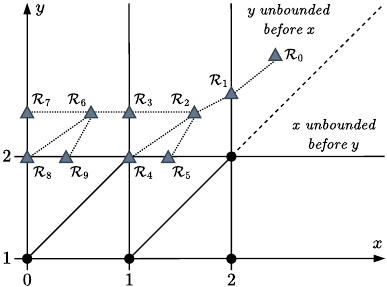}
        \caption{Partial computation of delay predecessors of $\reg_0$.}
        \label{fig:partcompfig}
    \end{subfigure}%
    \hfill
    %
    \begin{subfigure}[H]{0.44\textwidth}
        \centering
       \rowcolors{2}{lightgray}{white}  
        \begin{tabular}{
        >{\centering\arraybackslash}p{1.3cm}  
        >{\centering\arraybackslash}p{1.3cm}  
        p{2cm}                              
        }
        \toprule
        \textbf{Region} & \textbf{Class} & \textbf{Unbounded} \\
        \midrule
        $\reg_0$ & $\cu$ & yes, $\{x, y\}$ \\
        $\reg_1$ & $\ca$ & yes, $\{y\}$ \\
        $\reg_2$ & $\cb$ & yes, $\{y\}$ \\
        $\reg_3$ & $\ca$ & yes, $\{y\}$ \\
        $\reg_4$ & $\ca$ & no, $\varnothing$ \\
        $\reg_5$ & $\cc$ & no, $\varnothing$ \\
        $\reg_6$ & $\cb$ & yes, $\{y\}$ \\
        $\reg_7$ & $\ca$ & yes, $\{y\}$ \\
        $\reg_8$ & $\ca$ & no, $\varnothing$ \\
        $\reg_9$ & $\cc$ & no, $\varnothing$ \\
        \bottomrule
        \end{tabular}
        \caption{Table listing regions $\reg_i$, $0 \le i \le 9$, their class, and their unbounded clocks (if any).}
        \label{fig:partcomptab}
    \end{subfigure}

    \caption{Example of delay predecessors computation from $\reg_0$ (left), stopped upon reaching bounded regions, and a table with additional information about the regions (right).
    Triangles symbolically represent entire regions rather than specific points within them.}
    \label{fig:partcomb}
\end{figure}

\begin{example}
\label{ex:delaypred}
Let $\aut$ be a \ac{ta} having clocks $\ackc = \{ x, y \}$ and maximum constant $\cmax = 2$.  
Consider the region $\reg_0$ 
as depicted in \autoref{fig:partcompfig}.
Here, triangles are used to symbolically represent entire regions rather than individual points (\emph{e.g.}, the triangle
corresponding to $\reg_2$ represents the region 
$\{ (x, y) \; | \; 1 < x < 2 \land y > 2 \}$).
For this reason, they are not drawn on the diagonals, as would be implied by \autoref{def:clock_equivalence}.
\Red{For visual clarity, the figure does not represent the portion of the state space where $y < 1$.}
Since $\reg_0$ lies above the diagonal originating from $(2,2)$, it corresponds to the case in which $y$ became unbounded before $x$.  
$\reg_0$ can be represented as:
$
\mathcal{R}_0 = \{ q,\, h(x) = 2,\, h(y) = 2,\, X_{-2} = \{ x \},\, X_{-1} = \{ y \},\, X_0 = \varnothing \}.
$

We compute the delay predecessors of $\reg_0$ using \autoref{alg:finddelaypred}.  
For conciseness, we stop the computation when reaching a bounded region or because no further delay predecessors exist.  
The algorithm yields a single immediate delay predecessor of class $\ca$ (\autoref{fig:partcomptab}):
$
\mathcal{R}_1 = \{ q,\, h(x) = 2,\, h(y) = 2,\, X_{-1} = \{ y \},\, X_0 = \{ x \} \}
$.
Applying the same algorithm to $\reg_1$ yields a single immediate delay predecessor:
$
\mathcal{R}_2 = \{ q,\, h(x) = 1,\, h(y) = 2,\, X_{-1} = \{ y \},\, X_0 = \varnothing,\, X_1 = \{ x \} \}.
$
Then, from $\reg_2$, three immediate delay predecessors can be computed. Two of them ($\reg_4$ and $\reg_5$) are not further analyzed in this example, as they are already bounded.
Consider the region:
$
\mathcal{R}_3 = \{ q,\, h(x) = 1,\, h(y) = 2,\, X_{-1} = \{ y \},\, X_0 = \{ x \} \}.
$  
This region is structurally equivalent to $\reg_1$, except that the integer value of $x$ has been decremented by 1.
The computation from $\reg_1$ applies similarly from $\reg_3$.
\end{example}

The computation of delay predecessors can be optimized.
Assume that, for a region $\reg$, \autoref{alg:finddelaypred} can be applied 
$n$ times. 
Furthermore, assume that all clocks in $\reg$ are bounded and that $\ridx_{\reg} > 0$ holds.
%
%
We define the \emph{period} of $\reg$ as
$\theta = 2 \cdot (\ridx_{\reg} + 1)$ if $X_{\reg, 0} \neq \varnothing$; $\theta = 2 \cdot \ridx_{\reg}$ otherwise.
Then, decrement the integer value of every clock of $\reg$ by $\lfloor n / \theta \rfloor$ (ensuring their integer value does not go below zero), obtaining a new region $\reg'$.
After this step, if none of the clocks of $\reg'$ are equal to zero \Red{(with no fractional part)}, it suffices to apply \autoref{alg:finddelaypred} only $n \bmod \theta$ times to $\reg'$ to obtain the final result, possibly stopping earlier if one or more clocks become equal to zero during this process.
Similar optimizations can also be applied when some clocks are unbounded.  


\textbf{Discrete predecessors (intuition)}~When computing a discrete successor, a subset of clocks may be reset, and therefore appear with value zero in the successor region. 
Discrete predecessors computation aims to reverse this operation. 
However, starting from a region in which the value of some clocks is zero (with no fractional part), a challenge arises, since the original values of these clocks prior to being reset are unknown. 
As a result, all possible combinations of values for the reset clocks must be considered (including cases where they may have been unbounded) to ensure all discrete predecessors are computed.
This process can be simplified using the guard of the transition over which discrete predecessors are computed, as it may constrain the possible values of clocks prior to reset.

The following theorem (whose proof is shown in the Appendix, \autoref{thm:proof_of_discrete_predecessor_complexity}) establishes that, when tracking the order in which clocks become unbounded, the number of discrete predecessors of a given region suffers from an inherent exponential blow-up that cannot be avoided in general.

\begin{theorem}
\label{thr:discrete_predecessor_complexity}
    Let $\aut$ be a \ac{ta} such that $n = \lvert \ackc \rvert$ and let $\reg$ be a region of $\aut$. If the order in which clocks become unbounded must be tracked, then the number of discrete predecessors of $\reg$ over a single transition is $O\left( (\frac{\cmax + 1}{\ln 2}) ^{n}\cdot (n+1)!\right)$.
\end{theorem}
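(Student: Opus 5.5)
The plan is to bound the number of candidate predecessor regions by counting the ways to reconstruct the values of the reset clocks. Fix a transition $t = (q', a, \gamma, Y, q_{\reg})$. Any discrete predecessor $\reg'$ of $\reg$ over $t$ must map to $\reg$ under the reset of Algorithm~\ref{alg:finddiscretesucc}. The non-reset clocks $\ackc \setminus Y$ therefore keep their integer values and their relative position in the structure. Concretely, their order of becoming unbounded and the order of their fractional parts are inherited from $\reg$. The only freedom lies in the clocks of $Y$, which appear in $X_{\reg,0}$ with $h_{\reg} = 0$. Let $k = \lvert Y \rvert \le n$. A predecessor is determined by three choices for each reset clock: whether it was bounded or unbounded, its integer value in $\{0,\dots,\cmax\}$ if bounded, and its position in the combined ordered structure. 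Since the bound is monotone in $k$, I would take the worst case $k = n$, where every clock is reset and the structure of $\reg$ imposes nothing.

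First, I count the integer values: each clock gets one of $\cmax+1$ values, contributing at most $(\cmax+1)^n$. Unbounded clocks carry $h = \cmax$ and add no extra factor. Next, I count the structural arrangements. The clocks are distributed into a sequence $X_{-\lidx},\dots,X_{-1},X_0,X_1,\dots,X_{\ridx}$ of sets. All sets except $X_0$ are nonempty, and the order of the sets matters on both sides of $X_0$. I would encode such a structure as an ordered set partition (weak order) of $\ackc \cup \{\star\}$, where the marker $\star$ is placed in the block playing the role of $X_0$. The blocks before $\star$'s block become $X_{-\lidx},\dots,X_{-1}$, and those after become $X_1,\dots,X_{\ridx}$. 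This encoding is injective, so the number of structures is at most the ordered Bell (Fubini) number $a(n+1)$. Some combinations are infeasible, for example unbounded clocks must have $h = \cmax$, but an upper bound lets us ignore these.

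Then I apply the classical asymptotic bound on Fubini numbers, $a(m) \le \frac{m!}{2(\ln 2)^{m+1}}\,(1+o(1))$, which yields $a(n+1) = O\!\left(\frac{(n+1)!}{(\ln 2)^{n}}\right)$ after absorbing constants. Multiplying by the value count gives $(\cmax+1)^n \cdot \frac{(n+1)!}{(\ln 2)^n} = \left(\frac{\cmax+1}{\ln 2}\right)^n (n+1)!$. The guard $\gamma$ only filters candidates, so the bound still holds. I would also note that $Y$ is fixed by $t$, so there is no additional $2^n$ factor from choosing which clocks were reset.

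The main obstacle is the exact accounting in the structure-counting step. I must check that the marker encoding into ordered partitions of $n+1$ elements is injective and covers the optional emptiness of $X_0$. I also need to cite or derive a clean non-asymptotic bound $a(m) \le m!/(\ln 2)^{m}$, for instance via the series $a(m) = \frac{1}{2}\sum_{j\ge 0} j^m/2^j$ compared to $\int_0^\infty x^m 2^{-x}\,dx = m!/(\ln 2)^{m+1}$, so that the $O$-statement holds uniformly. A secondary subtlety is the case $k<n$. There the fixed clocks constrain the interleaving, and I would bound that count by the unconstrained one rather than compute it exactly.
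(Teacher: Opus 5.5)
Your proposal is correct, and it takes a different and shorter route than the paper's. The paper fixes a worst-case transition ($\gamma = \bigwedge_i x_i \ge 0$, $Y = \ackc$) and proves an explicit count (the Lemma). That count is split by the number $u$ of unbounded clocks and the number $i$ of integer-valued bounded clocks, with weights $(c_m+1)^i c_m^{n-u-i}$ and a product $a_u\,a_{n-u-i}$ of Fubini numbers. The paper then collapses the inner sum via the recurrence $\sum_{j}\binom{m}{j}a_j = 2a_m$ and bounds the weights by $(c_m+1)^{n+1}$. Finally it estimates the convolution $T_n=\sum_u\binom{n}{u}a_u a_{n-u}$ term by term as $(n+1)\cdot O(n!/(\ln 2)^n)$.

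Your marker $\star$ for $X_0$ turns the whole structure into a single ordered partition of $n+1$ objects. Splitting by the position of $\star$'s block ($u$ clocks in earlier blocks, $i$ clocks sharing it) recovers exactly the paper's decomposition, $a_{n+1}=\sum_u\binom{n}{u}a_u\sum_i\binom{n-u}{i}a_{n-u-i}$. So one Fubini asymptotic replaces both the recurrence manipulation and the convolution estimate. Your route buys three things:
\begin{itemize}
\item \emph{Generality.} You are really bounding all $(h,\text{structure})$ pairs at the source location of $t$, so the bound holds for every transition. The reduction to $k=n$ and the monotonicity remark are therefore unnecessary.
\item \emph{Uniformity in $c_m$.} You get exactly $(c_m+1)^n a_{n+1}$. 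The paper ends with $2(c_m+1)^{n+1}T_n$ and absorbs the extra factor $c_m+1$ into the constant.
\item \emph{Robustness.} The paper applies the collapsing identity to $\sum_i\binom{n}{n-u-i}a_{n-u-i}$, but the recurrence needs $\binom{n-u}{n-u-i}$, which is also the correct count. With $\binom{n}{\cdot}$ the sum strictly exceeds $2a_{n-u}$ for $0<u<n$.
\end{itemize}
In return, the paper's Lemma gives a finer explicit expression that separates integer-valued, fractional and unbounded clocks.

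On your remaining obstacle, the integral comparison goes through once you add the maximum term of the unimodal function $x^m2^{-x}$. By Stirling this term is at most $m!/\bigl((\ln 2)^m\sqrt{2\pi m}\bigr)$, which gives $a_m\le\frac{m!}{2(\ln 2)^m}\bigl(\frac{1}{\ln 2}+\frac{1}{\sqrt{2\pi m}}\bigr)\le\frac{m!}{(\ln 2)^m}$ for $m\ge 1$. Your count is then explicitly at most $\frac{1}{\ln 2}\bigl(\frac{c_m+1}{\ln 2}\bigr)^n(n+1)!$.
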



\section{Implementation and Experimental Evaluation}
\label{sec:implementation_and_experimental_evaluation}
This section introduces \toolname{} along with a detailed empirical evaluation of its forward and backward reachability algorithms.
To express reachability queries, we use the standard Computation Tree Logic formula $\exists \Diamond ( \varphi )$, meaning that there exists a path in the state space in which $\varphi$ eventually holds true.

\subsection{The \textnormal{\toolname{}} library}
\toolname{} is a C\texttt{++}20 library for performing forward and backward reachability analysis of \ac{ta} and networks of \ac{ta}, using either \ac{bfs} or \ac{dfs} state space exploration.
\toolname{} supports shared integer variables (specified by using general arithmetic expressions), urgent locations, and invariants. 
The semantics of both invariants and networks in \toolname{} follows those of Uppaal~\cite{Bengtsson2004}.
Support for networks of \ac{ta} and integer variables in backward reachability is currently under development.
Successors and predecessors algorithms were revised to allow each clock to have its own maximum constant.

\toolname{} accepts .txt input files, parsed using the Boost Spirit X3 library, containing a \ac{ta} representation written in a domain-specific language called \emph{Liana}.
Each file represents a single \ac{ta}; a network of \ac{ta} is defined using multiple files.

\textbf{Networks of \ac{ta}}~A \emph{network} of size $n$ is the parallel composition of $n$ individual \ac{ta}~\cite[Chapter 9]{Baier}.
Given $n$ regions $\reg_1, \dots, \reg_n$, each corresponding to a \ac{ta} $\aut_1, \dots, \aut_n$ in the network, a \emph{network region} $\overline{\reg}$ is defined as $\overline{\reg} := \{ \reg_1, \dots, \reg_n \}$. 
To compute both immediate delay and discrete successors of $\overline{\reg}$, \toolname{} 
{relies on specialized data structures that generalize the applicability of the algorithms originally developed for individual regions to network regions.
These data structures ensure the correctness of on-the-fly immediate delay and discrete successors computation across the Cartesian product of $\aut_1, \dots, \aut_n$.
In particular, a C\texttt{++} double-ended queue maintains the relative ordering of the fractional parts of all clocks within $\overline{\reg}$, partitioning these clocks into ordered equivalence classes.
This ordering captures the evolution over time of all clocks within $\overline{\reg}$, including which clocks will exit the unit (possibly becoming unbounded), reach the next unit, or be reset.
Furthermore, a C\texttt{++} set identifies the regions in $\overline{\reg}$ that are of class $\ca$ or $\cc$.
Since the generalized algorithms for network regions operate by applying the successors algorithms to a subset of individual regions in $\overline{\reg}$, this classification is essential for computing immediate delay successors; indeed, regions of class $\ca$ or  $\cc$ contain clocks that will exit the unit; hence, their immediate delay successors must be computed prior to those of regions of class $\cb$.}

\textbf{Implementation considerations}~Since the focus of this paper is the implementation of algorithms using the new representation of \autoref{eq:region}, most optimizations are deferred to future work.
Currently, \toolname{} does not implement any optimizations for state space exploration, except for symmetry reduction~\cite{symmspin}.
Memory compression optimizations are not implemented either.
The algorithms also suffer when computing delay successors (predecessors), as these computations require generating a sequence of immediate delay successors (predecessors).
This limitation could be mitigated by implementing the optimization described in \autoref{sec:delay_and_discrete_predecessors} for delay predecessors (which can be adapted for delay successors as well).
\Red{Furthermore, since \toolname{} only performs reachability analysis, proving that a region is unreachable going forward requires exploring the entire reachable state space.}
\Red{Due to the aforementioned limitations, a \ac{bfs} exploration must visit all delay successors (predecessors) before visiting discrete successors (predecessors), which is typically unnecessary and increases the verification time.
For this reason, unless stated otherwise, all experiments follow a \ac{dfs} strategy.}

\subsection{Forward reachability benchmarks}
\setlength{\intextsep}{0pt}
\begin{wrapfigure}[12]{r}{0.35\textwidth}
\centering
\captionsetup{skip=5pt}
\begin{tikzpicture}[
    ->,
    >=stealth',
    shorten >=2pt, 
    auto,
    transform shape,
    align=center,
    scale=0.865,
    state/.style={thick, circle, draw, minimum size=1cm}
] 
\useasboundingbox (-2, -3.1) rectangle (2.9, 1.71);

  \node[state, accepting] (q) {$q_0$};
  \node[state, below of=q, yshift=-1.5cm] (goal) {$\mathit{Goal}$};
  \node[above=1cm of q, xshift=23pt, yshift=-37pt] (dots) {\Large $\mathbf{\cdots}$};
  \draw [thick] 
  (q) edge[loop, in=180, out=240, looseness=8]  node[below, xshift=0pt, yshift=-6pt]  {$x_1 = 1$ \\ $\resetClock{x_1}$} (q)
  
  (q) edge[loop, in=90, out=150, looseness=8] node[right, xshift=21pt, yshift=0pt] {$x_2 = 2$ \\ $\resetClock{x_2}$} (q)
  
  (q) edge[loop, in=300, out=0, looseness=8] node[right, xshift=1pt, yshift=5pt] {$x_n = n$ \\ $\resetClock{x_n}$} (q)
  
  (q) edge node[right, yshift=-13pt] {$x_1 = \cdots = x_n = 0$ \\ $\land \; y \geq 1$} (goal);

\end{tikzpicture}
\caption{\flower{}. $q_0$ is initial.}
\label{fig:flower_main_text}
\end{wrapfigure}
We compared \toolname{} against Uppaal 5.0 and TChecker (commit d711ace) in forward reachability\footnote{To ensure a fair comparison, symmetry reduction was disabled in both \toolname{} and Uppaal; notably, this optimization cannot be manually controlled in TChecker.}.
The benchmarks \Red{(adjusted to ensure compatibility across all tools)} come from publicly available repositories, published literature, or manually crafted when required.
If possible, the reachability queries match those of the original benchmarks; 
otherwise, they have been adapted to fit the capabilities of the tools.
Uppaal was run with standard DBM without memory optimizations, TChecker was configured with the \texttt{covreach} flag to match the algorithm used in Uppaal.
%
Experiments were conducted on a MacBook Pro equipped with an M3 chip and 24 GB of RAM.

In the following, tables report \ac{vt}, \ac{et}, memory (Mem, representing the Maximum Resident Set Size), and the number of stored regions and states.
\Red{\ac{et} measures the time elapsed between issuing a verification command and receiving the result from a tool.}
Timings are in seconds, memory in MB.
The size of a network is denoted by $K$ (for \flower{}, $K$ indicates the number of clocks).
All values are averages over five runs, with a timeout of 600 s set on \ac{et}.
Timeouts are reported as TO, $\varepsilon$ denotes times below 0.001 s, and out-of-memory events are denoted by OOM.
Bold cells indicate the fastest \ac{vt} (on ties, the fastest \ac{et}).
KO indicates that an error occurred during verification.
The complete benchmark suite is in the Appendix, \autoref{app:complete_experimental_evaluation}.

\textbf{\ac{ta} with punctual guards}~Region-based abstractions can outperform zone-based approaches when \acp{ta} are \emph{closed}, \emph{i.e.}, contain only non-strict guards.
This is particularly evident when guards are \emph{punctual}, \emph{i.e.}, restricted to equality constraints.
The \flower{} \ac{ta}~\cite{J_rgensen_2012} depicted in \autoref{fig:flower_main_text}, which is also used in our evaluation of backward reachability, satisfies these requirements.
It consists of two locations: $q_0$ (which is initial) and $\mathit{Goal}$, and it features $n + 1$ clocks, denoted by $x_1, \dots, x_n, y$.
The guards over clocks $x_i$ are punctual; the guard on $y$ is non-strict. 
Since $y$ is never reset, the behavior of \flower{} is equivalent to a model featuring only punctual guards.
To reach $\mathit{Goal}$, all clocks $x_i$ must simultaneously be exactly zero, while $y$ must satisfy $y \ge 1$.
Specifically, when transitioning to $\mathit{Goal}$, $y$ must be equal to the least common multiple of $1, 2, \dots, n$.
Similar behavior is shown in the additional examples reported in \autoref{tab:punctual}: these are networks of \ac{ta} with only punctual guards.
In all these experiments, punctual guards force zones to be repeatedly and finely split, as reflected by the large number of states shown in \autoref{tab:punctual}.
The results indicate that Uppaal and TChecker experience a substantial decrease in performance when $K \ge 8$.

\definecolor{lightgray}{gray}{0.95}
\definecolor{lightgreen}{RGB}{210,255,190}
\setlength{\aboverulesep}{1pt}
\setlength{\belowrulesep}{1pt}
{\renewcommand{\arraystretch}{0.95}
{\setlength{\tabcolsep}{3.25pt}
\begin{table}[tb]
\centering
\caption{When all transitions have punctual guards, regions may outperform zones.
Here, \flower{} is a single \ac{ta}, while the others are networks of \ac{ta}.}
\label{tab:punctual}
\resizebox{\textwidth}{!}{
\begin{tabular}{%
    c|rrrr|rrrr|rrrr
}
\toprule
\multirow{2}{*}{\textbf{K}}
  & \multicolumn{4}{c|}{\textbf{TARZAN}}
  & \multicolumn{4}{c|}{\textbf{TChecker}}
  & \multicolumn{4}{c}{\textbf{UPPAAL}} \\
  & \multicolumn{1}{c}{\textbf{VT}}
  & \multicolumn{1}{c}{\textbf{ET}}
  & \multicolumn{1}{c}{\textbf{Mem}}
  & \multicolumn{1}{c|}{\textbf{Regions}}
  & \multicolumn{1}{c}{\textbf{VT}}
  & \multicolumn{1}{c}{\textbf{ET}}
  & \multicolumn{1}{c}{\textbf{Mem}}
  & \multicolumn{1}{c|}{\textbf{States}}
  & \multicolumn{1}{c}{\textbf{VT}}
  & \multicolumn{1}{c}{\textbf{ET}}
  & \multicolumn{1}{c}{\textbf{Mem}}
  & \multicolumn{1}{c}{\textbf{States}} \\
\midrule
\multicolumn{13}{c}{$\blacktriangleright$
{\normalsize \boolea{}}~\cite{tarzan2025} ::
[Query: $\exists \Diamond( \mathit{ctr}_1 = 1 \land \mathit{ctr}_2 = 1 \land \dots \land \mathit{ctr}_K = 1 )$]
$\blacktriangleleft$} \\
\midrule
\rowcolor{white}
4  & $\varepsilon$ & 0.013 & 7.78 & 24
   & $\boldsymbol{\varepsilon}$ & \textbf{0.007} & \textbf{6.47} & \textbf{185}
   & 0.001 & 0.052 & 18.85 & 185 \\
\rowcolor{lightgray}
6  & \textbf{0.003} & \textbf{0.016} & \textbf{13.42} & \textbf{425}
   & 0.088 & 0.096 & 13.80 & 8502
   & 0.075 & 0.128 & 19.91 & 8502 \\
\rowcolor{white}
8  & \textbf{0.013} & \textbf{0.027} & \textbf{29.10} & \textbf{1009}
   & 75.461 & 75.980 & 220.82 & 566763
   & 77.204 & 77.523 & 81.79 & 566763 \\
\midrule
\multicolumn{13}{c}{$\blacktriangleright$
{\normalsize \flower{}}~\cite{J_rgensen_2012} :: 
[Query: $\exists \Diamond ( \mathit{Flower.Goal} )$]
$\blacktriangleleft$} \\
\midrule
\rowcolor{white}
5  & ${\boldsymbol \varepsilon}$ & \textbf{0.013} & \textbf{7.92} & \textbf{95}
   & 0.001 & 0.007 & 6.35 & 150
   & 0.001 & 0.056 & 18.82 & 150 \\
\rowcolor{lightgray}
7  & $\boldsymbol{\varepsilon}$ & \textbf{0.022} & \textbf{10.65} & \textbf{573}
   & 0.322 & 0.330 & 13.12 & 3508
   & 0.399 & 0.455 & 19.76 & 3508 \\
\rowcolor{white}
9  & \textbf{0.007} & \textbf{0.022} & \textbf{58.62} & \textbf{9161}
   & --- & TO & --- & ---
   & --- & TO & --- & --- \\
\midrule
\multicolumn{13}{c}{$\blacktriangleright$
{\normalsize \gates{}}~\cite{tarzan2025} ::
[Query: $\exists \Diamond ( \mathit{Unlocker.Goal} )$]
$\blacktriangleleft$} \\
\midrule
\rowcolor{white}
5  & $\boldsymbol{\varepsilon}$ & \textbf{0.013} & \textbf{8.84} & \textbf{76}
   & 0.001 & 0.008 & 7.19 & 549
   & 0.001 & 0.055 & 19.01 & 549 \\
\rowcolor{lightgray}
7  & $\boldsymbol{\varepsilon}$ & \textbf{0.014} & \textbf{11.23} & \textbf{151}
   & 0.110 & 0.122 & 21.08 & 32979
   & 0.035 & 0.092 & 22.79 & 32979 \\
\rowcolor{white}
9  & \textbf{0.001} & \textbf{0.015} & \textbf{15.43} & \textbf{251}
   & 47.895 & 48.771 & 806.72 & 3071646
   & 5.228 & 6.509 & 451.71 & 3071646 \\
\midrule
\multicolumn{13}{c}{$\blacktriangleright$
{\normalsize \ring{}}~\cite{tarzan2025} ::
[Query: $\exists \Diamond ( P_0.\mathit{Goal} \land P_1.\mathit{Goal} \land ... \land P_K.\mathit{Goal} )$]
$\blacktriangleleft$} \\
\midrule
\rowcolor{white}
4  & \textbf{0.001} & \textbf{0.015} & \textbf{12.40} & \textbf{542}
   & 0.020 & 0.029 & 11.56 & 7738
   & 0.010 & 0.065 & 19.52 & 7738 \\
\rowcolor{lightgray}
6  & \textbf{0.011} & \textbf{0.025} & \textbf{48.96} & \textbf{3397}
   & 12.137 & 12.561 & 268.93 & 1180575
   & 3.406 & 3.715 & 128.87 & 1180575 \\
\rowcolor{white}
8  & \textbf{0.204} & \textbf{0.225} & \textbf{687.77} & \textbf{47063}
   & --- & TO & --- & ---
   & --- & TO & --- & --- \\
\bottomrule
\end{tabular}
}
\end{table}
}
}

\textbf{Closed \ac{ta}}~Two networks of closed \acp{ta} were adapted from the benchmarks available on~\cite{tapaal2025}:
\medical{} models medical checks for patients, while \mpeg{} models the homonymous compression standard, where compression leverages a given number of bidirectional frames (BFrames).
Since \toolname{} does not currently support broadcast channels, these models were rewritten into equivalent versions that use only point-to-point channels.
The results in \autoref{tab:closed} reveal a high number of stored states.
This indicates that these models restrict the evolution of clocks in a way that forces zones to be repeatedly and finely split, thus pushing these models into a category where zone-based abstractions exhibit poor performance, a phenomenon already demonstrated in the above experiments (see \autoref{tab:punctual}).
Furthermore, both Uppaal and TChecker exhibit an unexpectedly high verification time in \medical{}. 
This indicates that the dominant cost in this benchmark originates from zone manipulation, whose operations may have quadratic or cubic complexity in the number of clocks~\cite{J_rgensen_2012}. 
Region-based analysis avoids this overhead, as its fundamental operations have linear complexity in the number of clocks.
Interestingly, TChecker aborted execution on the largest instances of \medical{} raising the error: \emph{"ERROR: value out of bounds"}.

\definecolor{lightgray}{gray}{0.95}
\setlength{\aboverulesep}{1pt}
\setlength{\belowrulesep}{1pt}
{\renewcommand{\arraystretch}{0.95}
{\setlength{\tabcolsep}{3.25pt}
\begin{table}[tb]
\centering
\caption{The results are derived from networks of closed \acp{ta}, where guards may either be punctual or contain non-strict inequalities.
In this setting, \toolname{} can handle both large networks (\medical{}) and large constants (\mpeg{}).}
\label{tab:closed}
\resizebox{\textwidth}{!}{
\begin{tabular}{%
    c|rrrr|rrrr|rrrr
}
\toprule
\multirow{2}{*}{\textbf{K}}
  & \multicolumn{4}{c|}{\textbf{TARZAN}}
  & \multicolumn{4}{c|}{\textbf{TChecker}}
  & \multicolumn{4}{c}{\textbf{UPPAAL}} \\
  & \multicolumn{1}{c}{\textbf{VT}}
  & \multicolumn{1}{c}{\textbf{ET}}
  & \multicolumn{1}{c}{\textbf{Mem}}
  & \multicolumn{1}{c|}{\textbf{Regions}}
  & \multicolumn{1}{c}{\textbf{VT}}
  & \multicolumn{1}{c}{\textbf{ET}}
  & \multicolumn{1}{c}{\textbf{Mem}}
  & \multicolumn{1}{c|}{\textbf{States}}
  & \multicolumn{1}{c}{\textbf{VT}}
  & \multicolumn{1}{c}{\textbf{ET}}
  & \multicolumn{1}{c}{\textbf{Mem}}
  & \multicolumn{1}{c}{\textbf{States}} \\
\midrule
\multicolumn{13}{c}{$\blacktriangleright$
{\normalsize \medical{}}~\cite{tapaal2025} ::
[Query: $\exists \Diamond ( \mathit{Patient}_1.\mathit{Done} \land \mathit{Patient}_2.\mathit{Done} \land \dots \land \mathit{Patient}_{K - 2}.\mathit{Done} ) $]
$\blacktriangleleft$} \\
\midrule
\rowcolor{white}
12  & 0.007 & 0.021 & 33.39 & 660
   & 0.011 & 0.030 & 22.63 & 639
   & \textbf{0.002} & \textbf{0.059} & \textbf{20.83} & \textbf{582} \\
\rowcolor{lightgray}
22  & 0.056 & 0.072 & 204.97 & 2715
   & 0.316 & 0.560 & 173.75 & 3774
   & \textbf{0.032} & \textbf{0.093} & \textbf{31.76} & \textbf{2862} \\
\rowcolor{white}
42  & \textbf{0.609} & \textbf{0.647} & \textbf{2271.65} & \textbf{15025}
   & 13.571 & 16.096 & 2517.18 & 25544
   & 1.275 & 1.379 & 230.07 & 16522 \\
\rowcolor{lightgray}
72  & \textbf{5.781} & \textbf{5.959} & \textbf{12891.17} & \textbf{68990}
   & 355.827 & 389.223 & 16109.45 & 126949
   & 54.200 & 54.838 & 2799.57 & 74762 \\
\rowcolor{white}
102  & \textbf{27.918} & \textbf{28.466} & \textbf{13780.57} & \textbf{189555}
   & KO & KO & KO & KO
   & 593.317 & 597.306 & 12364.67 & 202302 \\
\midrule
\multicolumn{13}{c}{$\blacktriangleright$
{\normalsize \mpeg{}}~\cite{tapaal2025} ::
[Query: $\exists \Diamond ( \mathit{BFrame}_1.\mathit{Bout} \land \mathit{BFrame}_2.\mathit{Bout} \land \dots \land \mathit{BFrame}_{K - 4}.\mathit{Bout} )$]
$\blacktriangleleft$} \\
\midrule
\rowcolor{white}
8  & 0.365 & 0.385 & 451.46 & 26916
   & 0.003 & 0.014 & 8.96 & 531
   & \textbf{0.001} & \textbf{0.058} & \textbf{19.46} & \textbf{537} \\
\rowcolor{lightgray}
12  & \textbf{0.796} & \textbf{0.819} & \textbf{718.63} & \textbf{27238}
   & 8.405 & 8.663 & 252.70 & 383103
   & 1.621 & 1.766 & 112.33 & 383109 \\
\rowcolor{white}
16  & \textbf{1.364} & \textbf{1.392} & \textbf{1032.46} & \textbf{27704}
   & --- & TO & --- & ---
   & --- & TO & --- & --- \\
\bottomrule
\end{tabular}
}
\end{table}
}
}

\textbf{Large constants}~When constants are large, zones are still preferable.
\autoref{tab:constants} reports results from two real-world case studies: the Carrier Sense Multiple Access / Collision Detection protocol (\csma{})~\cite{kiviriga2021randomized} and \pagerank{}~\cite{Baresi2020}, a mechanism used to prioritize Web keyword search results.
The latter is presented in two variants: in the first, the last two digits of the original constants were truncated, while the second uses the full constants.
The largest constant in \csma{} is 808, whereas in \pagerank{} they are 101 (truncated) and 10162 (full).

\definecolor{lightgray}{gray}{0.95}
\setlength{\aboverulesep}{1pt}
\setlength{\belowrulesep}{1pt}
{\renewcommand{\arraystretch}{0.95}
{\setlength{\tabcolsep}{3.25pt}
\begin{table}[tb]
\centering
\caption{Large constants negatively impact the \ac{vt} of regions, unlike zones.
Here, the first row of \pagerank{} corresponds to the variant with truncated constants.}
\label{tab:constants}
\resizebox{\textwidth}{!}{
\begin{tabular}{%
    c|rrrr|rrrr|rrrr
}
\toprule
\multirow{2}{*}{\textbf{K}}
  & \multicolumn{4}{c|}{\textbf{TARZAN}}
  & \multicolumn{4}{c|}{\textbf{TChecker}}
  & \multicolumn{4}{c}{\textbf{UPPAAL}} \\
  & \multicolumn{1}{c}{\textbf{VT}}
  & \multicolumn{1}{c}{\textbf{ET}}
  & \multicolumn{1}{c}{\textbf{Mem}}
  & \multicolumn{1}{c|}{\textbf{Regions}}
  & \multicolumn{1}{c}{\textbf{VT}}
  & \multicolumn{1}{c}{\textbf{ET}}
  & \multicolumn{1}{c}{\textbf{Mem}}
  & \multicolumn{1}{c|}{\textbf{States}}
  & \multicolumn{1}{c}{\textbf{VT}}
  & \multicolumn{1}{c}{\textbf{ET}}
  & \multicolumn{1}{c}{\textbf{Mem}}
  & \multicolumn{1}{c}{\textbf{States}} \\
\midrule
\multicolumn{13}{c}{$\blacktriangleright$
{\normalsize \csma{}}~\cite{kiviriga2021randomized} ::
[Query: $\exists \Diamond \left( \parbox{7.5cm}{\begin{math}P_1.\mathit{senderRetry} \land P_2.\mathit{senderRetry} \land P_3.\mathit{senderTransm} \end{math} \\ \begin{math}\land \; P_4.\mathit{senderRetry} \land \dots \land P_7.\mathit{senderRetry}\end{math}} \right)$]
$\blacktriangleleft$} \\
\midrule
\rowcolor{white}
21  & 0.198 & 0.215 & 364.96 & 20936
   & 0.004 & 0.038 & 30.32 & 232
   & \textbf{0.001} & \textbf{0.061} & \textbf{20.24} & \textbf{269} \\
\rowcolor{lightgray}
31  & 0.674 & 0.701 & 1194.45 & 47526
   & 0.018 & 0.218 & 117.03 & 371
   & \textbf{0.002} & \textbf{0.064} & \textbf{21.11} & \textbf{554} \\
\rowcolor{white}
51  & 3.364 & 3.431 & 5351.35 & 131906
   & 0.293 & 1.386 & 837.12 & 1381
   & \textbf{0.010} & \textbf{0.077} & \textbf{23.89} & \textbf{1424} \\
\midrule
\multicolumn{13}{c}{$\blacktriangleright$
{\normalsize \pagerank{}}~\cite{Baresi2020} ::
[Query: $\exists \Diamond ( \mathit{Stage}_7.\mathit{Completed} )$]
$\blacktriangleleft$} \\
\midrule
\rowcolor{white}
9  & 0.068 & 0.085 & 179.33 & 4193
   & 0.002 & 0.086 & 58.54 & 1129
   & \textbf{0.002} & \textbf{0.076} & \textbf{22.89} & \textbf{1129} \\
\rowcolor{lightgray}
9  & 4.227 & 4.361 & 10950.36 & 257311
   & 0.002 & 0.085 & 58.53 & 1129
   & \textbf{0.002} & \textbf{0.076} & \textbf{23.01} & \textbf{1129} \\
\bottomrule
\end{tabular}
}
\end{table}
}
}

\textbf{Safety properties}~When safety properties must be verified using forward reachability, \toolname{} requires to explore the entire reachable state space.
\autoref{tab:safety} reports the results for the Fischer mutual exclusion protocol~\cite{uppaalSite}.
The verified property specifies that two given processes must never be in the critical section simultaneously.
\Red{In the reminder of this section, we demonstrate that backward reachability may enhance the verification of safety properties by exploring only a restricted subset of the state space, potentially avoiding out-of-memory events.}

\definecolor{lightgray}{gray}{0.95}
\setlength{\aboverulesep}{1pt}
\setlength{\belowrulesep}{1pt}
{\renewcommand{\arraystretch}{0.95}
{\setlength{\tabcolsep}{3.25pt}
\begin{table}[t]
\centering
\caption{Safety properties require \toolname{} to visit the entire reachable state space.}
\label{tab:safety}
\resizebox{\textwidth}{!}{
\begin{tabular}{%
    c|rrrr|rrrr|rrrr
}
\toprule
\multirow{2}{*}{\textbf{K}}
  & \multicolumn{4}{c|}{\textbf{TARZAN}}
  & \multicolumn{4}{c|}{\textbf{TChecker}}
  & \multicolumn{4}{c}{\textbf{UPPAAL}} \\
  & \multicolumn{1}{c}{\textbf{VT}}
  & \multicolumn{1}{c}{\textbf{ET}}
  & \multicolumn{1}{c}{\textbf{Mem}}
  & \multicolumn{1}{c|}{\textbf{Regions}}
  & \multicolumn{1}{c}{\textbf{VT}}
  & \multicolumn{1}{c}{\textbf{ET}}
  & \multicolumn{1}{c}{\textbf{Mem}}
  & \multicolumn{1}{c|}{\textbf{States}}
  & \multicolumn{1}{c}{\textbf{VT}}
  & \multicolumn{1}{c}{\textbf{ET}}
  & \multicolumn{1}{c}{\textbf{Mem}}
  & \multicolumn{1}{c}{\textbf{States}} \\
\midrule
\multicolumn{13}{c}{$\blacktriangleright$
{\normalsize \fischer{}}~\cite{uppaalSite} ::
[Query: $\exists \Diamond ( \mathit{Fischer}_1.\mathit{cs} \land \mathit{Fischer}_2.\mathit{cs} ) $]
$\blacktriangleleft$} \\
\midrule
\rowcolor{white}
4  & 0.644 & 0.669 & 1098.36 & 180032
   & $\boldsymbol{\varepsilon}$ & \textbf{0.007} & \textbf{6.75} & \textbf{220}
   & 0.001 & 0.055 & 18.96 & 220 \\
\rowcolor{lightgray}
5  & 19.627 & 19.849 & 12948.99 & 3920652
   & 0.004 & 0.011 & 9.30 & 727
   & \textbf{0.002} & \textbf{0.057} & \textbf{19.07} & \textbf{727} \\
\rowcolor{white}
6  & --- & --- & OOM & ---
   & 0.024 & 0.032 & 12.25 & 2378
   & \textbf{0.008} & \textbf{0.064} & \textbf{19.43} & \textbf{2378} \\
\bottomrule
\end{tabular}
}
\end{table}
}
}

\subsection{Backward reachability analysis with \textnormal{\toolname{}}}
We begin with a backward unreachability example on \flower{}, and then compare its forward and backward reachability results.
The entire \flower{} reachable state space, when considering five clocks and computed going forward starting from the initial region
$
\reg_0 = \{ q_0, h(x_1) = h(x_2) = h(x_3) = h(x_4) = h(y) = 0, X_0 \}$, where $X_{\reg_0, 0} = \{ x_1, x_2, x_3, x_4, y \}
$, resulted in 1517 regions, computed in 1.18 ms.

Backward reachability can offer a practical advantage for proving unreachability, since it may avoid a complete state space exploration.
Consider the region
$
\reg_1 = \{ q_0, h(x_1) = 1, h(x_2) = 2, h(x_3) = h(x_4) = 0, h(y) = 1, X_{-1}, X_0 \}$, where 
$X_{\reg_1, -1} = \{ x_2, y \}$ and $X_{\reg_1, 0} = \{ x_1, x_3, x_4 \}
$.
$\reg_1$ is unreachable: it requires both $x_2$ and $y$ to become unbounded simultaneously, but the constraint $y \ge 1$ forces $y$ to become unbounded before $x_2$.
\toolname{} classifies $\reg_1$ as unreachable by verifying that $\reg_0$ is unreachable from $\reg_1$, requiring 272 regions in 0.245 ms.
By adjusting $\reg_1$ with the correct ordering in which clocks become unbounded, we obtain
$
\reg_2 = \{ q_0, h(x_1) = 1, h(x_2) = 2, h(x_3) = h(x_4) = 0, h(y) = 1, X_{-2}, X_{-1}, X_0 \}$, where 
$X_{\reg_2, -2} = \{ x_2 \}$, $X_{\reg_2, -1} = \{ y \}$, and $X_{\reg_2, 0} = \{ x_1, x_3, x_4 \}
$.
Region $\reg_2$ is reachable.
Reaching $\reg_0$ from $\reg_2$ required 263 regions in 0.378 ms.

Now, consider the version of \flower{} with $K = n + 1$ clocks.  
Let $\reg_K$ denote the region in location $\mathit{Goal}$ in which all clocks $x_i$ are exactly equal to zero (no fractional part) and $y \ge 1$ holds. 
\autoref{tab:backward_scalability} reports the \ac{vt} and number of regions stored by \toolname{} when checking the forward and backward reachability of $\reg_K$.
The results show that forward reachability is generally faster and requires fewer regions, while backward reachability remains comparable \Red{in this setting}.

\begin{table}[tb]
\centering
\caption{Backward reachability results. \fischer{} goes out of memory at $K=6$. Tests use \ac{dfs} except \train{}, for which \ac{bfs} is faster. \flower{} checks reachability, \fischer{} unreachability, and \train{} reachability (rows 1,2) and unreachability (rows 3,4).}
\label{tab:backward_scalability}
{\setlength{\tabcolsep}{2.25pt}
\resizebox{\textwidth}{!}{
\begin{tabular}{r r | r r !{\vrule width 1pt} r r | r r !{\vrule width 1pt} r r | r r}
\toprule
\multicolumn{4}{c!{\vrule width 1pt}}
{$\blacktriangleright$ {\normalsize {\flower{}}~\cite{J_rgensen_2012}} :: DFS $\blacktriangleleft$} & 
\multicolumn{4}{c!{\vrule width 1pt}}
{$\blacktriangleright$ {\normalsize {\fischer{}}~\cite{tarzan2025}} :: DFS $\blacktriangleleft$} & 
\multicolumn{4}{c}
{$\blacktriangleright$ {\normalsize{\train{}}~\cite{tarzan2025}} :: BFS $\blacktriangleleft$} \\
\midrule
\multicolumn{2}{c|}{\makecell{\emph{Forward}\\$K = 11,13,15,17$}}
& \multicolumn{2}{c!{\vrule width 1pt}}{\makecell{\emph{Backward}\\$K = 11,13,15,17$}}
& \multicolumn{2}{c|}{\makecell{\emph{Forward}\\$K = 3,4,5,6$}}
& \multicolumn{2}{c!{\vrule width 1pt}}{\makecell{\emph{Backward}\\$K = 3,4,5,6$}}
& \multicolumn{2}{c|}{\makecell{\emph{Forward}\\$K = 4,5,4,5$}}
& \multicolumn{2}{c}{\makecell{\emph{Backward}\\$K = 4,5,4,5$}} \\
\multicolumn{1}{c}{\textbf{VT}} 
& \multicolumn{1}{c|}{\textbf{Regions}}
& \multicolumn{1}{c}{\textbf{VT}}
& \multicolumn{1}{c!{\vrule width 1pt}}{\textbf{Regions}}
& \multicolumn{1}{c}{\textbf{VT}}
& \multicolumn{1}{c|}{\textbf{Regions}}
& \multicolumn{1}{c}{\textbf{VT}}
& \multicolumn{1}{c!{\vrule width 1pt}}{\textbf{Regions}} 
& \multicolumn{1}{c}{\textbf{VT}}
& \multicolumn{1}{c|}{\textbf{Regions}}
& \multicolumn{1}{c}{\textbf{VT}} 
& \multicolumn{1}{c}{\textbf{Regions}} \\
\midrule
\rowcolor{white} 0.028 & 30761 & 0.060 & 38669 & 0.007 & 9083 & $\varepsilon$ & 264 & $\varepsilon$ & 1241 & 0.171 & 118868 \\
\rowcolor{lightgray} 0.412 & 370331 & 0.615 & 388911 & 0.141 & 183459 & 0.001 & 1316 & 0.026 & 31516 & 66.877 & 31051667 \\

\hhline{~~~~~~>{\arrayrulecolor{black}}~~---->{\arrayrulecolor{black}}}

\rowcolor{white} 7.188 & 5142671 & 9.414 & 4921043 & 4.061 & 4182681 & 0.008 & 8724 & 0.013 & 17124 & 0.030 & 26914 \\
\rowcolor{lightgray} 16.212 & 11023829 & 21.497 & 11077295 & --- & --- & 0.100 & 77084 & 1.000 & 912597 & 0.635 & 614220 \\
\bottomrule
\end{tabular}%
}
}
\end{table}

Backward reachability was employed to verify that in \fischer{} two processes cannot occupy the critical section simultaneously.
As backward reachability does not currently support networks of \ac{ta} nor integer variables, we developed a script to construct a single \ac{ta} version of \fischer{}~\cite{tarzan2025}, in which the \emph{pid} variable is directly encoded within its locations.
The safety property to be verified with $K$ processes is described as follows: processes $1$ through $K-2$ are requesting access to the critical section, with their clocks strictly between $1$ and $2$; processes $K-1$ and $K$ are both in the critical section, with their clocks greater than $2$; \emph{pid} is equal to $K$.
A preliminary phase was necessary to compute the initial subset of regions from which backward reachability was started. 
This required considering all ordered partitions of the fractional parts of bounded clocks as well as all ordered partitions of unbounded clocks.
\Red{Clock constraints in the property were introduced to enable the manual computation of this subset.}
Backward reachability results, reported in \autoref{tab:backward_scalability}, demonstrate superior performance compared to forward reachability, \Red{to the extent that an out-of-memory event becomes feasible}.
\Red{Indeed, the unreachable portion of the state space limited the backward exploration to a relatively small number of regions.}
We evaluated the same property with Uppaal (which uses forward reachability) on the same \fischer{} model and obtained:
for $K=3$, 88 states in 0.001 s;
for $K=4$, 460 states in 0.004 s;
for $K=5$, 2765 states in 0.025 s;
and for $K=6$, 19042 states in 0.214 s.
In this comparison, \toolname{} outperformed Uppaal when using backward reachability.

Backward reachability was also applied to \train{}~\cite{tarzan2025} (we developed a script to convert it into a single \ac{ta} version).
Exploration was conducted in \ac{bfs}, as it achieved better performance on this model.
\train{} allows two or more trains to be in the crossing while the gate is down.
Rows 1,2 of \autoref{tab:backward_scalability} report the results for (i) reaching the first region detected by the \ac{bfs} exploration in which at least two trains occupy the crossing while the gate is down, and (ii) computing the backward predecessors from this region to an initial one. 
In this case, backward reachability requires substantially more regions than forward reachability.
We also verified the safety property stating that whenever a train is in the crossing, the number of trains in the crossing cannot be zero (\emph{i.e.}, no counter malfunction).
Such property with $K-2$ trains, a gate, and a controller, is described as follows: trains $1$ to $K-3$ are not in the crossing and their clocks are strictly between $2$ and $3$; train $K-2$ is in the crossing and its clock is exactly $4$; the gate's clock is strictly between $1$ and $2$, and the controller's clock is greater than $1$. 
As in \fischer{}, we first explicitly identified the initial subset of regions from which backward reachability was started. 
Rows 3,4 of \autoref{tab:backward_scalability} show that, in this case, forward and backward reachability perform comparably.

\section{Conclusion and Future Works}
\label{sec:conclusion}
We introduced \toolname{}, a region-based verification library supporting forward and backward reachability for \ac{ta}. 
Our experiments demonstrated that \toolname{} exhibits superior performance on closed \ac{ta} and \ac{ta} with punctual guards, while zone-based approaches remain better suited to other classes of \ac{ta}. 
We showed that backward reachability with \toolname{} is often feasible in practice and effective for verifying safety properties.
\Red{For this reason, the integration of \toolname{} into existing tools has the potential to enhance their verification capabilities.}

Future work will proceed in several directions. 
We plan to integrate additional optimizations into \toolname{}, with a particular focus on improving the computation of delay successors and predecessors, while leveraging symbolic techniques to reduce memory usage. 
In addition, we will extend backward reachability algorithms to networks of \ac{ta}, thereby laying the foundation for applying \toolname{} to the synthesis of winning strategies in Timed Games.
Finally, we intend to explore hybrid approaches that complement \toolname{} with zone-based techniques, thereby combining the strengths of both abstractions.

\bibliographystyle{splncs04}
\bibliography{bibliography}

\iftoggle{EXTENDED_VERSION}{
\newpage
\appendix
\setcounter{theorem}{0}
\setcounter{lemma}{0}

\section{Immediate Delay Successors}
\label{app:foreword_delay_successors_algorithm}
\autoref{alg:finddelaysucc}, whose time complexity is $O(\lvert \ackc \rvert)$, computes the (unique) immediate delay successor of a given region $\reg$, based on the class to which $\reg$ belongs.  
In particular, Line~\ref{line:fds:copyr} creates a copy $\breg$ of $\reg$, returned as the result in Line~\ref{line:fds:returnreg}.  
If $\reg$ belongs to class $\cu$ (Line~\ref{line:fds:isofclassu}), it has no delay successors at all, so we return the same region on Line~\ref{line:fds:returnsame}.
If $\reg$ belongs to either class $\ca$ or class $\cc$ (Line~\ref{line:fds:isofclassac}), two auxiliary sets are initialized on Line~\ref{line:fds:inittmpoob}:  
$X_{\mathit{tmp}}$ stores the clocks that exit the unit (\emph{i.e.}, leaving $X_{\breg, 0}$), while $X_{\mathit{oob}}$ collects the clocks that become unbounded.  
For each clock $x \in X_{\breg, 0}$ (with no fractional part), the loop at Line~\ref{line:fds:foreachxinx0} adds $x$ to either $X_{\mathit{oob}}$ (Line~\ref{line:fds:insertintooob}) or $X_{\mathit{tmp}}$ (Line~\ref{line:fds:insertinttmp}), and removes $x$ from $X_{\breg, 0}$ (Line~\ref{line:fds:removexfromx0}).  
If any clocks left the unit, they are inserted to the right of $X_{\breg, 0}$, as they now have the smallest fractional part (Lines~\ref{line:fds:tmpisnotempty},\ref{line:fds:insertxtmpright}).  
If a clock becomes unbounded, it is placed to the left of $X_{\breg, -\lidx_{\breg}}$, reflecting that it is the most recently unbounded clock (Lines~\ref{line:fds:oobisnotempty},\ref{line:fds:insertxoobright}).  
If $\reg$ is of class $\cb$ (Line~\ref{line:fds:isofclassb}), then the clocks with the highest fractional part (\emph{i.e.}, those in $X_{\breg, \ridx_{\breg}}$) reach the next unit.  
The loop at Line~\ref{line:fds:incrementx} increments their integer value by 1 (Line~\ref{line:fds:incrementxbodyloop}).  
Since these clocks have entered the next unit, they are inserted into $X_{\breg, 0}$, and $X_{\breg, \ridx_{\breg}}$ is removed from $\breg$ (Line~\ref{line:fds:insertremove}).

\begin{algorithm}[th]
    \caption{\texttt{find-immediate-delay-successor}$(\reg)$}
    \label{alg:finddelaysucc}
    
    \KwData{$\reg$: a region.}
    \KwResult{The immediate delay successor of $\reg$.}
    \Begin() {
        $\breg \gets$ copy of $\reg$\; \label{line:fds:copyr}
        \uIf{$X_{\reg,0} = \varnothing \land \ridx_{\reg} = 0$}{ \label{line:fds:isofclassu}
            \tcp{$\reg$ belongs to class U}
            \Return $\reg$\; \label{line:fds:returnsame}
        }
        \uElseIf{$X_{\reg,0} \neq \varnothing$}{ \label{line:fds:isofclassac}
            \tcp{$\reg$ belongs to either class Z or class M}
            $X_{\mathit{tmp}} \gets \varnothing$; $X_{\mathit{oob}} \gets \varnothing$\; \label{line:fds:inittmpoob}
            
            \ForEach{$x \in X_{\breg,0}$}{ \label{line:fds:foreachxinx0}
                \uIf{$h_{\breg}(x) = c_{\max}$}{ \label{line:fds:insertintooobif}
                    $X_\mathit{oob} \gets X_\mathit{oob} \cup \{x\}$\; \label{line:fds:insertintooob}
                }
                \Else{
                    $X_{\mathit{tmp}} \gets X_{\mathit{tmp}} \cup \{x\}$\; \label{line:fds:insertinttmp}
                }
                $X_{\breg,0} \gets X_{\breg,0} \setminus \{x\}$\; \label{line:fds:removexfromx0}
            }
            
            \If{$X_{\mathit{tmp}} \neq \varnothing$}{ \label{line:fds:tmpisnotempty}
                insert $X_{\mathit{tmp}}$ to the right of $X_{\breg, 0}$\; \label{line:fds:insertxtmpright}
            }
            \If{$X_{\mathit{oob}} \neq \varnothing$}{ \label{line:fds:oobisnotempty}
                insert $X_{\mathit{oob}}$ to the left of $X_{\breg, -\lidx_{\breg}}$\; \label{line:fds:insertxoobright}
            }
            }
        \Else{ \label{line:fds:isofclassb}
         \tcp{$\reg$ belongs to class P}
                \ForEach{$x \in X_{\breg, \ridx_{\breg}}$}{ \label{line:fds:incrementx}
                    $h_{\breg}(x) \gets h_{\breg}(x) + 1$\; \label{line:fds:incrementxbodyloop}
                }
                insert clocks of $X_{\breg, \ridx_{\breg}}$ in $X_{\breg, 0}$ and remove $X_{\breg, \ridx_{\breg}}$ from $\breg$\; \label{line:fds:insertremove}
            }
        
        \Return $\breg$\; \label{line:fds:returnreg}
        }
\end{algorithm}

\section{Discrete predecessors}
\label{app:discrete_predecessors_algorithm}
\begin{algorithm}[t]
    \caption{\texttt{find-discrete-predecessors}$(\reg)$}
    \label{alg:finddiscretepred}
    
    \KwData{$\reg$: a region.}
    \KwResult{The discrete predecessors of $\reg$.}
    
    \Begin(){
        $\mathit{res} \gets \varnothing$\; \label{line:fdp:resempty}
        
        \ForEach{$t  = (q, a, \gamma, Y, q_{\reg})  \in  T$}{ \label{line:fdp:startloop}
        
        $X_{\mathit{bnd}} \gets \varnothing$; $X_{\mathit{oob}} \gets \varnothing$\; \label{line:fdp:initbndoob}
        $\breg \gets$ copy of $\reg$\; \label{line:fdp:copyr}
        $\allclocks{\breg} \gets$ a reference to the clocks of $\breg$\; \label{line:fdp:clockreference}
        $q_{\breg} \gets q$\; \label{line:fdp:assignq}

        \ForEach{clock $x \in Y$} { \label{line:fdp:foreachcconstr}

            \uIf{$x$ belongs to a constraint of the form $(x = c) \in \gamma$}{ \label{line:fdp:isequality1}
                $h_{\breg}(x) = c$\; \label{line:fdp:isequality2}
            }

            \uElseIf{$x$ belongs to a constraint of the form $(x > \cmax) \in \gamma$}{ \label{line:fdp:isunbo1}
                $h_{\breg}(x) = \cmax$\; \label{line:fdp:assignqoob}
                $X_{\mathit{oob}} \gets X_{\mathit{oob}} \cup \{ x \}$;
                $\allclocks{\breg} \gets \allclocks{\breg} \setminus \{ x \}$\; \label{line:fdp:isunbo2}
            }

            \Else{ \label{line:fdp:isbound1}
                $X_{\mathit{bnd}} \gets X_{\mathit{bnd}} \cup \{ x \}$;
                $\allclocks{\breg} \gets \allclocks{\breg} \setminus \{ x \}$\;  \label{line:fdp:isbound2}
            }
        }

        \uIf{$X_{\mathit{bnd}} = \varnothing$}{ \label{line:fdp:bndisempty}
            $\res \gets \res$ $\cup$ \texttt{part-regs}($\breg, -\lidx_{\breg}, -1, X_\mathit{oob}, h_{\breg}$)\; \label{line:fdp:oobfubini}
        }
        \Else{

        \ForEach{function $\bar{h} : X_{\mathit{bnd}} \to \{0, \dots, \cmax + 1\}$}{ \label{line:fdp:foreachfunc}
           
            \label{line:fdp:gammasatisfiedbyh}
                
                $\Delta \gets \{ x \in X_{\mathit{bnd}} \; \vert \; \bar{h}(x) > \cmax \}$\; \label{line:fdp:createdelta}
                
                \ForEach{$R \in$ \textnormal{\texttt{part-regs}}$(\breg, -\lidx_{\breg}, -1, X_\mathit{oob} \cup \Delta, \bar{h} \cup h_{\breg})$}{ \label{line:fdp:allrinpermreg}
                    $\res \gets \res$ $\cup$ \texttt{part-regs}($R, 0, \ridx_R, X_{\mathit{bnd}} \setminus \Delta, \bar{h} \cup h_{R}$)\; \label{line:fdp:resUres}
                    }
            }
        }   
        remove from $\res$ all regions not satisfying $\gamma$\; \label{line:fdp:removefromres}
        }

        \Return $\res$\; \label{line:fdp:returnres}
}
\end{algorithm}

\begin{algorithm}[th]
    \caption{\texttt{part-regs}$(\reg, i, j, X, \mathcal{H})$}
    \label{alg:perm-regs}
    
    \KwData{$\reg$: a region; 
    $i, j$: two integers such that $i \le j$ holds;
    $X$: a set of clocks,
    $\mathcal{H}$: a function assigning integer values to clocks.}
    \KwResult{all regions obtained from $\reg$ by finding all ordered partitions of $X$ while preserving the relative order of $X_{\reg,i}, \dots, X_{\reg,j}$.}
    \Begin(){
            $\res \gets \varnothing$\; \label{line:fdppr:res}
            $\breg \gets $ copy of $\reg$\; \label{line:fdppr:copyreg}

            \If{$X = \varnothing$}{ \label{line:fdppr:xisempty}
                \Return $\{ \reg \}$\; \label{line:fdppr:returnbregxempty}
            }

            \ForEach{$x \in X$}{ \label{line:fdppr:foreachx}
                \If{$i \ge 0 \land \mathcal{H}(x) = \cmax$}{ \label{line:fdppr:iandhxcond}
                    $X_{\breg, 0} \gets X_{\breg, 0} \cup \{ x \}$;
                    $X \gets X \setminus \{ x \}$\; \label{line:fdppr:removeinsertx}
                }
            }
            $\mathcal{X} \gets \{ X_{\breg, i}, \dots, X_{\breg, j} \}$\; \label{line:fdppr:preserve}
            $\Pi \gets $ all partitions of $X$\; \label{line:fdppr:partitions}

            \ForEach{partition $\pi \in \Pi$}{ \label{line:fdppr:forpartitions}
                \begin{varwidth}[t]{0.8\linewidth}{$\mathcal{Z} \gets$ ordered partitions of $(\mathcal{X} \cup \pi)$ that preserve the relative order of the elements in $\mathcal{X}$; these partitions are obtained by computing all permutations of $\pi$ and interleaving them with $\mathcal{X}$, allowing elements of $\pi$ to also be inserted within the elements of $\mathcal{X}$}\; \label{line:fdppr:orderedpartitions}
                \end{varwidth}
                \nl
                \ForEach{ordered partition $\zeta \in \mathcal{Z}$}{ \label{line:fdppr:foreachpermut}
                   \begin{varwidth}[t]{0.8\linewidth}{$R \gets $ copy of $\breg$ such that ($\mathcal{X} \cup \pi$) are partitioned according to $\zeta$ and clocks have a value given by $\mathcal{H}$ (if $\mathcal{H}(x) > \cmax$, then assign $\cmax$ to $x$)\;} \label{line:fdppr:computecopy}
                   \end{varwidth}
                   \nl
                   $\res \gets \res \cup \{ R \}$\; \label{line:fdppr:collectr}
                }
           }
           \Return $\res$\; \label{line:fdppr:returnres}
        }        
\end{algorithm}

The discrete predecessors of a region $\reg$ can be computed using \autoref{alg:finddiscretepred}.
In particular, Line~\ref{line:fdp:resempty} initializes a set storing the computed predecessor regions.
A loop over all transitions $t$ ending in $q_{\reg}$, the location of $\reg$, 
begins at Line~\ref{line:fdp:startloop}.
Line~\ref{line:fdp:initbndoob} initializes a set $X_{\mathit{bnd}}$, which will contain clocks
that were bounded or possibly unbounded, and a set $X_{\mathit{oob}}$, which will contain clocks that were certainly unbounded (in a predecessor).
A copy $\breg$ of $\reg$ is made at Line~\ref{line:fdp:copyr}.
Line~\ref{line:fdp:clockreference} creates a reference $\allclocks{\breg}$ to the clock sets of $\breg$, meaning that changes to $\allclocks{\breg}$ (\emph{e.g.}, removing a clock) directly affect $\breg$.
The location of $\breg$ is updated on Line~\ref{line:fdp:assignq} to match the one where $t$ originates.
\Red{Then, the loop at Line~\ref{line:fdp:foreachcconstr} examines each reset clock $x \in Y$:
(i) if $x$ appears in an equality clock constraint of the form $(x = c) \in \gamma$, only the integer value of $x$ must be set (Lines~\ref{line:fdp:isequality1},\ref{line:fdp:isequality2});
(ii) if $x$ was unbounded, \emph{i.e.}, it appears in an inequality constraint of the form $(x > \cmax) \in \gamma$, $x$ is added to $X_{\mathit{oob}}$ and removed from $\breg$ (Lines~\ref{line:fdp:isunbo1},\ref{line:fdp:isunbo2}), while its integer value is set to $\cmax$ (Line~\ref{line:fdp:assignqoob});
(iii) otherwise, $x$ was bounded or possibly unbounded, hence
$x$ is added to $X_{\mathit{bnd}}$ and removed from $\breg$ (Lines~\ref{line:fdp:isbound1},\ref{line:fdp:isbound2}).
The latter case also accounts for reset clocks $x' \in Y$ that do not appear in any constraint of $\gamma$, which is equivalent to assuming a trivial constraint of the form $(x' \geq 0) \in \gamma$.
It is important to note that every clock in $Y$ is required to be exactly zero, \emph{i.e.}, to have an integer value equal to zero and no fractional part, since all clocks in $Y$ are reset along transition $t$.}

If $X_{\mathit{bnd}}$ is empty (Line~\ref{line:fdp:bndisempty}), the predecessors over the current transition are computed by finding all ordered partitions of $X_{\mathit{oob}}$, preserving the order of unreset unbounded clocks. This is performed at Line~\ref{line:fdp:oobfubini} using the auxiliary algorithm \emph{part-regs}$(\reg', i, j, X, \mathcal{H})$,
which computes all regions derived from the input region $\reg'$ by finding all ordered partitions of a clock set $X$, while preserving the order of \Red{clock sets} $X_{\reg',i}, \dots, X_{\reg',j}$ (where $i \leq j$).
For each resulting region, the integer value of every clock $x$ is given by $\mathcal{H}(x)$.
\Red{If $X = \varnothing$, \emph{part-regs} returns the input region $\reg'$ unchanged.  
This also covers the case where a transition does not reset any clock; in such a situation, both $X_{\mathit{bnd}}$ and $X_{\mathit{oob}}$ are empty, and Line~\ref{line:fdp:oobfubini} simply inserts $\breg$ into the result set.}
The pseudocode of \emph{part-regs} is given in \autoref{alg:perm-regs}, and a detailed description follows at the end of this section.

Otherwise, all combinations of integer values (and potential unboundedness) for clocks in $X_{\mathit{bnd}}$ must be considered. This is done in the loop at Line~\ref{line:fdp:foreachfunc}, which iterates over all functions $\bar{h}$ defined on $X_{\mathit{bnd}}$. Note that $\bar{h}$ maps only clocks in $X_{\mathit{bnd}}$, while $h_{\breg}$ maps clocks in $\allclocks{\breg} \setminus X_{\mathit{bnd}}$; the conventional value $\cmax + 1$ captures the case in which a clock was unbounded in the predecessor.
All clocks in $X_{\mathit{bnd}}$ that are unbounded under $\bar{h}$ ({\emph{i.e.}, the integer value of these clocks returned by $\bar{h}$ is $\cmax$ + 1}) are collected into a set $\Delta$ (Line~\ref{line:fdp:createdelta}).
The loop at Line~\ref{line:fdp:allrinpermreg} iterates over all regions generated by \emph{part-regs} from $\breg$ partitioning $X_{\mathit{oob}} \cup \Delta$, preserving the order of unreset unbounded clocks.
For each such region $R$, \emph{part-regs} is applied on Line~\ref{line:fdp:resUres} to $R$ partitioning $X_{\mathit{bnd}} \setminus \Delta$ (\emph{i.e.}, the bounded clocks), preserving the order of unreset bounded clocks.
\Red{On Line~\ref{line:fdp:removefromres}, all regions not satisfying the guard $\gamma$ of $t$ are removed from $\mathit{res}$.}
Line~\ref{line:fdp:returnres} returns the discrete predecessors of $\reg$.

\Red{It is worth noting that the total number of functions $\bar{h}$ defined over $X_{\mathit{bnd}}$ in Line~\ref{line:fdp:foreachfunc} can be reduced by tracking the minimum and maximum integer values that the clocks may assume. 
For example, consider two clocks $x,y$ appearing in the constraints $1 \leq x \land x \leq 4 \land y \geq 5$. 
In this case, it is unnecessary to generate all combinations of integer values for $x$ and $y$ from $0$ to $\cmax + 1$; instead, it suffices to generate combinations where $\bar{h}(x) \in [1,4]$ and $\bar{h}(y) \in [5,\cmax+1]$.
In addition, note that \emph{part-regs} may generate invalid regions that must be discarded. 
This occurs, for instance, when a clock $x$ appears in a constraint $x > c$ and is assigned the integer value $\bar{h}(x) = c$. 
In this case, $x$ must have a fractional part greater than zero; however, when computing all possible ordered partitions, it may happen that $x$ is assigned no fractional part. 
Similarly, if $x$ appears in a constraint $x \le c$ and is assigned $\bar{h}(x) = c$, it must have a fractional part equal to zero, but the computation of ordered partitions may assign it a fractional part greater than zero.  
These incorrect edge cases can be handled by adding additional checks to Algorithm~\ref{alg:finddiscretepred} and \emph{part-regs}, which are omitted here for conciseness.}

\textbf{Description of \emph{part-regs}}~Line~\ref{line:fdppr:res} initializes the result set.
Then, a copy $\breg$ of $\reg$ is made on Line~\ref{line:fdppr:copyreg} to avoid modifying the original input region.
Notice that, if the set $X$ is empty (Line~\ref{line:fdppr:xisempty}), $X$ has exactly one partition, \emph{i.e.}, the empty set itself. 
Therefore, for convenience, the original region is returned on Line~\ref{line:fdppr:returnbregxempty}.
The loop at Line~\ref{line:fdppr:foreachx} processes each clock $x \in X$, checking if it can be inserted into $X_{\breg, 0}$ (Line~\ref{line:fdppr:iandhxcond}) and doing so at Line~\ref{line:fdppr:removeinsertx}, while also removing $x$ from $X$.
This is done since a bounded clock can have an integer value equal to $\cmax$ only when it has no fractional part.
Lines~\ref{line:fdppr:preserve} and \ref{line:fdppr:partitions} compute the sets for which the order must be preserved and the partitions $\Pi$ of $X$, respectively.

The loop at Line~\ref{line:fdppr:forpartitions} iterates over $\Pi$, generating ordered partitions preserving the order of $\mathcal{X}$ (\emph{e.g.}, in a valid ordered partition involving $\mathcal{X} = \{X_1, X_2, X_3\}$, $X_1$ appears before $X_2$, which appears before $X_3$).
Ordered partitions are computed at Line~\ref{line:fdppr:orderedpartitions}; some of them may involve inserting elements of a given partition $\pi \in \Pi$ into elements of $\mathcal{X}$.
For each valid ordered partition $\zeta$ (Line~\ref{line:fdppr:foreachpermut}), a region is created by copying $\breg$ (Line~\ref{line:fdppr:computecopy}), updating its sets of clocks according to $\zeta$, and using $\mathcal{H}$ for assigning integer values to clocks (the values returned by $\mathcal{H}$ are clamped to ensure no value exceeds $\cmax$).
Each such region is added to the result set (Line~\ref{line:fdppr:collectr}), which is returned on Line~\ref{line:fdppr:returnres}.

\begin{example}
    Let $\aut$ be a \ac{ta} such that $\ackc = \{ x, y, w, p, z, s \}$ and $\cmax = 5$, and let $t = (q_0, a, \gamma, Y, q) \in T$, where $\gamma :=x \ge 0 \land y \ge 1 \land  w > 5 \land p = 1 \land z > 3 \land s > 4$ and $Y = \{ x, y, w, p \}$, be the transition of $\aut$ over which, for illustrative purposes, we compute only one discrete predecessor, which we refer to as $\breg$, of the following region:
    $
    \reg = \{ q, h(x) = h(y) = h(w) = h(p) = 0, h(z) = h(s) = 4, X_0, X_1, X_2 \}
    $,
    where $X_0 = \{x, y, w, p \}$, $X_1 = \{ z \}$, and $ X_2 = \{ s \}$.
    To better understand how \autoref{alg:finddiscretepred} works, we reference some of its lines in the following computation.

    Since clocks $z$ and $s$ are not reset in $t$, they do not influence the computation of $\breg$.  
    From the analysis of $Y$ (Lines~\ref{line:fdp:foreachcconstr} through~\ref{line:fdp:isbound2}), we can infer the following: $x$ and $y$ may have been either bounded or unbounded in $\breg$ (here, we assume they were bounded), $w$ was unbounded, and $p$ had the value $1$ with no fractional part.
    In particular, $x$ and $y$ are now collected in $X_{\mathit{bnd}}$.
    Since $x$ and $y$ are assumed to be bounded in $\breg$, all possible combinations of their integer values must now be considered. 
    For the purposes of this example, a single combination suffices.
    Let $\bar{h}$ (Line~\ref{line:fdp:foreachfunc}) be such that $\bar{h}(x) = 3$ and $\bar{h}(y) = 2$ (this allows to satisfy $\gamma$ on Line~\ref{line:fdp:removefromres}, since $h_{\breg}(w) = 5$, $h_{\breg}(p) = 1$, and $h_{\breg}(z) = h_{\breg}(s) = 4$).
    Since the order of the fractional parts of $x$ and $y$ in $\breg$ is unknown, all possible ordered partitions of $X_{\mathit{bnd}}$ and $X_{\mathit{oob}}$ must be explored, while preserving the order of the fractional parts of $z$ and $s$ (they are not reset in $t$).  
    One such ordered partition can be generated using \emph{part-regs} (Lines~\ref{line:fdp:allrinpermreg},\ref{line:fdp:resUres}).  
    Finally,
    $
    \breg = \{ q_0, h(x) = 3,$ $h(y)=  2, h(w) = 5, h(p) = 1, h(z) = h(s) = 4, X_{-1}, X_0, X_1, X_2, X_3 \}
    $, where 
    $X_{-1} = \{ w \}, 
    X_0 = \{ p \}, 
    X_1 = \{ z \}, 
    X_2 = \{ x, y \},$ and $
    X_3 = \{ s \}
    $.
\end{example}


\section{Proof of \autoref{thm:bounded_d_pred}}
\label{thm:proof_of_bounded_d_pred}
\begin{theorem}
   If the order in which clocks become unbounded is tracked, the number of immediate delay predecessors of a given region is at most three.
\end{theorem}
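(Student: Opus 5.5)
We argue by case analysis on the class of the target region $\reg$ (\autoref{def:region_classes}), using the classification of transitions between classes given in \autoref{sec:data_structure}. For each class we identify which classes an immediate delay predecessor $\reg'$ can belong to. We then show that, within each admissible class, the structure of $\reg'$ (its sets $X_{\reg',i}$ and its function $h_{\reg'}$) is determined up to at most one binary choice. The key invariant is that the immediate delay successor is unique and is computed by \autoref{alg:finddelaysucc}. So $\reg'$ is a predecessor of $\reg$ exactly when applying \autoref{alg:finddelaysucc} to $\reg'$ yields $\reg$, and it suffices to invert each branch of that algorithm.

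\emph{Class $\cu$.} By the transition list, only a region of class $\ca$ can precede $\reg$, and in $\reg'$ every bounded clock is in $X_{\reg',0}$ with integer value $\cmax$. In the $\ca$-branch of \autoref{alg:finddelaysucc}, all these clocks form $X_{\mathit{oob}}$, which is placed as the new leftmost set $X_{\reg,-\lidx_{\reg}}$. Hence $X_{\reg',0}=X_{\reg,-\lidx_{\reg}}$, the remaining unbounded sets keep their order, and $h$ is unchanged. This gives exactly one predecessor. The fact that the order of unboundedness is tracked is what pins down which set must be moved back.

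\emph{Classes $\ca$ and $\cc$.} Here $X_{\reg,0}\neq\varnothing$. Any predecessor with nonempty $X_0$ would make all its $X_0$ clocks leave the unit, so $\reg'$ must be of class $\cb$. In that case the $\cb$-branch moves $X_{\reg',\ridx_{\reg'}}$ into $X_0$ and increments $h$. Thus $\reg'$ is uniquely obtained by moving $X_{\reg,0}$ to the rightmost bounded position and decrementing $h$ on it, exactly as in the second branch of \autoref{alg:finddelaypred}. The input assumption that no clock of $X_{\reg,0}$ has $h=0$ guarantees that the decrement is valid.

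\emph{Class $\cb$.} This is the main case. By the transition list, $\reg'$ is of class $\ca$ or $\cc$, and in the $\ca$/$\cc$-branch the set $X_{\reg',0}$ splits into $X_{\mathit{tmp}}$, which becomes $X_{\reg,1}$, and $X_{\mathit{oob}}$, which becomes $X_{\reg,-\lidx_{\reg}}$. Two sub-cases arise.
\begin{itemize}
\item If $X_{\mathit{oob}}=\varnothing$, then $X_{\reg',0}=X_{\reg,1}$ and everything else is unchanged, giving one predecessor.
\item If $X_{\mathit{oob}}\neq\varnothing$ (possible only when $\lidx_{\reg}\ge1$), then $X_{\mathit{oob}}=X_{\reg,-\lidx_{\reg}}$ is forced by tracking. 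Either $X_{\mathit{tmp}}=\varnothing$, so $X_{\reg',0}=X_{\reg,-\lidx_{\reg}}$ and $X_{\reg,1}$ stays a fractional set; or $X_{\mathit{tmp}}\neq\varnothing$, so $X_{\mathit{tmp}}=X_{\reg,1}$ and $X_{\reg',0}=X_{\reg,-\lidx_{\reg}}\cup X_{\reg,1}$.
\end{itemize}
The main obstacle is to justify that these options are exhaustive and that $X_{\mathit{tmp}}$ cannot be a proper part of $X_{\reg,1}$. This holds because the clocks of $X_{\reg,1}$ have the smallest fractional part in $\reg$. Any such clock either came from $X_{\reg',0}$, and then all clocks of equal (smallest) fractional part came from there, or it had a positive fractional part in $\reg'$, which contradicts $X_{\mathit{tmp}}$ being placed strictly left of all other fractional sets. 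Consequently there are at most three predecessors in this case ($\breg$, $\breg_1$, $\breg_2$ in \autoref{alg:finddelaypred}). Taking the maximum over all classes gives the bound of three.
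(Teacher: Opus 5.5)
Your proposal is correct and takes essentially the same approach as the paper's proof. Both argue by case analysis on the class of the target region, giving a unique predecessor for classes $Z$, $M$, $U$ and at most three for class $P$. Your split of the $P$ case by whether $X_{\mathit{oob}}$ and $X_{\mathit{tmp}}$ are empty regroups the paper's subcases, which are organised by the predecessor's class and how $\lidx$ and $\ridx$ change, into the same three candidates $\breg$, $\breg_1$, $\breg_2$ of \autoref{alg:finddelaypred}.
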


\begin{proof}
    We assume that all clocks in a given region have a value greater than zero, otherwise a delay predecessor cannot be computed.
    The proof is conducted by considering each class identified in \autoref{def:region_classes} and then analyzing all possible ways to transition backwards to predecessor regions using only delay transitions.

\begin{itemize}[noitemsep, topsep=0pt]
    \item \emph{Class} $\ca$:
    consider a region $\rega$ of class $\ca$.
    The immediate delay predecessor regions that can be reached backwards from $\rega$ belong to class $\cb$.
    Hence, let $\regb$ be a region of class $\cb$.
    Since $\ridx_{\rega} = 0$, it follows that from $\regb$ we can transition forward to $\rega{}$ only if $\ridx_{\regb} = 1$ holds (every clock must have a fractional part equal to zero when reaching $\rega$ from $\regb$).
    Since the only way to transition backwards from $\rega$ to $\regb$ is to remove all clocks from $X_{\rega, 0}$ and insert them into $X_{\regb, 1}$, while also decreasing the integer values of those clocks by 1, it follows that a region of class $\ca$ has exactly one immediate delay predecessor region of class $\cb$.

    \item \emph{Class} $\cb$:
    consider a region $\regb$ of class $\cb$.
    The immediate delay predecessor regions that can be reached backwards from $\regb$ belong to class $\ca$ or class $\cc$.
    Let $\rega$ be a region of class $\ca$ and $\regc$ be a region of class $\cc$.
\begin{enumerate}
    \item \label{thm1:classb1} \emph{From $\regb$ back to $\rega$}.
    This happens only if $\ridx_{\regb} = 1$.
    We must distinguish two further cases:
    (i) $\lidx_{\regb} = \lidx_{\rega}$: in this case, $X_{\rega, 0}$ contains the clocks that exited from the unit, thus entering in   $X_{\regb, 1}$. 
    Thus, to transition backwards from $\regb$ to $\rega$, it suffices to remove all clocks from $X_{\regb, 1}$ and insert them into $X_{\rega,0}$;
    (ii) $\lidx_{\regb} = \lidx_{\rega} + 1$: in this case, $X_{\rega, 0}$ contains the clocks that exited from the unit and the clocks that became unbounded, \emph{i.e.}, $X_{\rega, 0} = X_{\regb,-\lidx_{\regb}} \cup X_{\regb,1}$.
    Thus, to transition backwards from $\regb$ to $\rega$, it suffices to remove all clocks from $X_{\regb,-\lidx_{\regb}}$ and $X_{\regb, 1}$ and insert them into $X_{\rega,0}$.
    Hence, a region of class $\cb$ has at most two immediate delay predecessor regions of class $\ca{}$.

    \item \label{thm1:classb2} \emph{From $\regb$ back to $\regc$}.
    If $\ridx_{\regb} = 1$ holds, there is only one possible way to transition forward from $\regc$ to $\regb$, \emph{i.e.}, when $\ridx_{\regc} = 1$ and all clocks in $X_{\regc,0}$ must become unbounded.
    Hence, it is possible to transition backwards from $\regb$ to $\regc$ by removing all clocks from $X_{\regb,-\lidx_{\regb}}$ and inserting them into $X_{\regc,0}$.
    Otherwise, let us assume that $\ridx_{\regb} > 1$.
    We must distinguish three further cases:
    (i) $\lidx_{\regb} = \lidx_{\regc} \land \ridx_{\regb} = \ridx_{\regc} + 1$: in this case, $X_{\regc,0}$ contains exactly the clocks that will enter $X_{\regb,1}$; it is possible to transition backwards from $\regb$ to $\regc$ by removing all clocks from   $X_{\regb,1}$ and insert them into $X_{\regc,0}$;
    (ii) $\lidx_{\regb} = \lidx_{\regc} + 1 \land \ridx_{\regb} = \ridx_{\regc} + 1$: in this case, $X_{\regc,0}$ contains the clocks in $X_{\regb,-\lidx_{\regb}} \cup X_{\regb,1}$; it is possible to transition backwards from $\regb$ to $\regc$ by removing all clocks from $X_{\regb,-\lidx_{\regb}}$ and $X_{\regb,1}$ and insert them into $X_{\regc,0}$;
    (iii) $\lidx_{\regb} = \lidx_{\regc} + 1 \land \ridx_{\regb} = \ridx_{\regc}$: in this case, $X_{\regc,0}$ contains exactly the clocks that will become unbounded in $X_{\regb,-\lidx_{\regb}}$; it is possible to transition backwards from $\regb$ to $\regc$ by removing all clocks from $X_{\regb,-\lidx_{\regb}}$ and insert them into $X_{\regc,0}$.
    Hence, a region of class $\cb$ can either have one immediate delay predecessor of class $\cc$ to be added to the (up to two) immediate delay predecessors of class $\ca$ (when $\ridx_{\regb} = 1$), or at most three immediate delay predecessor regions of class $\cc$ when $\ridx_{\regb} > 1$.
\end{enumerate}
Combining the above results (\ref{thm1:classb1}) and (\ref{thm1:classb2}), it follows that a region of class $\cb$ has at most three immediate delay predecessors.

    \item \emph{Class} $\cc$:
    consider a region $\regc$ of class $\cc$.
    The immediate delay predecessor regions that can be reached backwards from $\regc$ belong to class $\cb$.
    Hence, let $\regb$ be a region of class $\cb$ such that $\ridx_{\regb} > 1$ (otherwise $\regb$ would reach a region of class $\ca$) and $\ridx_{\regb} = \ridx_{\regc} + 1$.
    The only way $\regc$ is reachable from $\regb$ is by letting the clocks in set $X_{\regb, \ridx_{\regb}}$ reach the next unit.
    Then, it is possible to transition backwards from $\regc$ to $\regb$ by removing all the clocks from $X_{\regc,0}$ and inserting them into $X_{\regb, \ridx_{\regb}}$, while also decreasing by 1 the integer value of those clocks.
    It follows that a region of class $\cc$ has exactly one immediate delay predecessor region of class $\cb$.

    \item \emph{Class} $\cu$:
    consider a region $\regu$ of class $\cu$.
    The immediate delay predecessor regions that can be reached backwards from $\regu$ belong to class $\ca$.
    Hence, let $\rega$ be a region of class $\ca$.
    The only way $\regu$ is reachable from $\rega$ is when all clocks in $X_{\rega,0}$ become unbounded simultaneously.
    It is possible to transition backwards from $\regu$ to $\rega$ by removing all clocks from $X_{\regu,-\lidx_{\regu}}$ and inserting them into $X_{\rega, 0}$.
    Hence, a region of class $\cu$ has exactly one immediate delay predecessor region of class $\ca$.
\end{itemize}

    It follows that regions in classes $\ca,\cc,\cu$ have one immediate delay predecessor; regions in class $\cb$ have at most three immediate delay predecessors.
    \hfill \qed
\end{proof}

\section{Proof of \autoref{thr:discrete_predecessor_complexity}}
\label{thm:proof_of_discrete_predecessor_complexity}
\begin{lemma}
\label{lem:discrete_complexity}
    Let $\aut$ be a \ac{ta} such that $n = \lvert \ackc \rvert$ and let $\reg$ be a region of $\aut$. If the order in which clocks become unbounded must be tracked, then the number of discrete predecessors of $\reg$ over a single transition is at most:
    $$
    \sum_{u = 0}^{n} \!
    \left(
    \sum_{i = 0}^{n - u} (\cmax + 1)^i (\cmax)^{n - u - i} 
    \binom{n}{n - u - i} \!\!
    \sum_{k = 0}^{n - u - i} k! \genfrac\{\}{0pt}{}{n - u - i}{k}
    \! \cdot \!
    \binom{n}{u} \!\!
    \sum_{w = 0}^{u} w! \genfrac\{\}{0pt}{}{u}{w} \!
    \right)
    $$
    where $\genfrac\{\}{0pt}{}{\alpha}{\beta} = \frac{1}{\beta!} \sum_{d = 0}^\beta (-1)^d \binom{\beta}{d} (\beta - d)^\alpha$ is a Stirling number of the second kind, that is, the number of ways to partition a set of $\alpha$ elements into $\beta$ non-empty subsets, for two given natural numbers $\alpha, \beta \in \mathbb{N}$.
\end{lemma}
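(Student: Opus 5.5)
The bound is a counting argument over the choices that Algorithm~\ref{alg:finddiscretepred} (or any procedure) must make to reconstruct a predecessor of $\reg$ over a fixed transition $t=(q,a,\gamma,Y,q_{\reg})$. Only clocks in $Y$ carry unknown information, since clocks outside $Y$ keep their integer value, their fractional ordering and their unboundedness ordering. In the worst case we may take $Y=\ackc$, so all $n$ clocks are free. I would also use that dropping the guard $\gamma$ can only add candidate predecessors. A predecessor is then determined by three ingredients:
\begin{itemize}
\item the set of reset clocks that were unbounded;
\item the integer values of the bounded ones, together with the ordered partition of their fractional parts;
\item the ordered partition recording when the unbounded ones became unbounded.
\end{itemize}
I would bound the number of possibilities for each ingredient separately and multiply, then sum over the size of the unbounded set.

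First I fix $u$, the number of clocks that were unbounded in the predecessor. There are $\binom{n}{u}$ ways to choose them, and every ordered partition of these $u$ clocks is a potential ``order of becoming unbounded''. There are $\sum_{w=0}^{u} w!\genfrac\{\}{0pt}{}{u}{w}$ such partitions (the Fubini number): pick $w$ blocks by a Stirling number of the second kind and order them in $w!$ ways. This accounts for the factor $\binom{n}{u}\sum_w w!\genfrac\{\}{0pt}{}{u}{w}$, and the outer sum over $u$ covers all splits.

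For the remaining $n-u$ bounded clocks, I would split them by whether they have zero fractional part. Let $i$ be the number of clocks lying in $X_0$; each can take an integer value in $\{0,\dots,\cmax\}$, which gives $(\cmax+1)^i$. The other $n-u-i$ clocks have nonzero fractional part, so their integer part lies in $\{0,\dots,\cmax-1\}$, which gives $\cmax^{\,n-u-i}$. Choosing which clocks are fractional is bounded by $\binom{n}{n-u-i}$, an overcount since we choose from $n$ rather than $n-u$, but harmless for an upper bound. Ordering their fractional parts is an ordered partition, which gives $\sum_k k!\genfrac\{\}{0pt}{}{n-u-i}{k}$. Multiplying and summing over $i$ and $u$ yields the displayed expression.

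The main point to check, and the step I expect to be the obstacle, is that this map from choices to predecessors is surjective. Every discrete predecessor of $\reg$ must arise from some tuple of choices, and the structure of $\reg$ (the preserved order of clocks not in $Y$) only restricts the possibilities. I would argue this using Algorithm~\ref{alg:finddiscretepred} and \emph{part-regs}: their output covers all predecessors, and the interleaving with the preserved sets $\mathcal{X}$ can be charged to the Stirling/factorial counts by treating the non-reset clocks as fixed blocks. This needs care, since interleaving with existing blocks can exceed a pure partition count of the reset clocks. I would handle it by counting partitions of all $n$ clocks, which is exactly why the binomials use $n$ rather than $n-u$, so the bound dominates.

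Finally, Theorem~\ref{thr:discrete_predecessor_complexity} would follow by bounding each inner Fubini sum by the $n$-th ordered Bell number, which is asymptotically $\frac{n!}{2(\ln 2)^{n+1}}$. The power terms then combine into $(\cmax+1)^n$, the binomial sums collapse, and the result is $O\bigl((\frac{\cmax+1}{\ln 2})^n (n+1)!\bigr)$.
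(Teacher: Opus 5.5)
Your proposal is correct and is essentially the paper's proof: the same worst-case transition (all clocks reset, trivial guard), the same split by the number $u$ of unbounded clocks and the number $i$ of bounded clocks in $X_0$, and the same product of $\binom{n}{u}$, the Fubini count for the unbounded ordering, $(\cmax+1)^i\cmax^{n-u-i}\binom{n}{n-u-i}$, and the Fubini count for the fractional ordering. The worst-case reduction you flag as the obstacle, which the paper leaves implicit, has a simpler justification: every discrete predecessor over $t$ is a region at the source location $q$, and your expression over-counts all regions at $q$. So the interleaving with preserved blocks is absorbed because all $n$ clocks are treated as free, not because of the $\binom{n}{\cdot}$ versus $\binom{n-u}{\cdot}$ overcount.

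Separately, your closing sketch for Theorem~\ref{thr:discrete_predecessor_complexity} would not give the stated bound. Bounding both Fubini factors by the $n$-th Fubini number only yields a bound of order at least $(n!)^2/(\ln 2)^{2n}$. The paper instead keeps $\binom{n}{u}$ together with the Fubini factors of sizes $u$ and $n-u$ and uses $\binom{n}{u}\,u!\,(n-u)!=n!$.
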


\begin{proof}
We assume that all clocks of $\aut$ share the same maximum constant $\cmax$.
The lemma also applies when each clock has its own maximum constant.

The proof proceeds by enumerating all possible ordered partitions of bounded and unbounded clocks.
We consider the following transition over which discrete predecessors of a region $\reg$ are computed:
$(q, a, \gamma, Y, q_{\reg}) \in T$, where the guard is of the form $\gamma := x_1 \ge 0 \land \dots \land x_n \ge 0$, where $\ackc = \{ x_1, \dots, x_n \}$, and the set of reset clocks is $Y = \ackc$, \emph{i.e.}, all clocks are reset, allowing them to take on any possible value up to $\cmax$ and potentially become unbounded.

Let $u$ denote the total number of clocks that are unbounded in the predecessor of $\reg$ (for conciseness, in the following we simply say bounded or unbounded).

If $u = n$, meaning that all clocks are unbounded, the total number of ordered partitions is given by:
$\sum_{w = 0}^{u} w! \genfrac\{\}{0pt}{}{u}{w}$.
This corresponds to the definition of \emph{Fubini numbers}~\cite{Comtet2012,Fubini}, which count the number of ordered partitions of a given set.

If $u = 0$, meaning that all clocks are bounded, the total number of ordered partitions is given by:
$\sum_{i = 0}^{n} (\cmax + 1)^i (\cmax)^{n - i} \binom{n}{n - i} \sum_{k = 0}^{n - i} k! \genfrac\{\}{0pt}{}{n - i}{k}$.
The index $i$ represents the total number of clocks that have no fractional part.
Since these clocks do not contribute to the ordering, the number of ordered partitions only depends on the remaining $n - i$ clocks, hence the Fubini sum goes up to $n - i$.
To also account for all possible ways of choosing which $n - i$ clocks have a fractional part greater than zero, the sum must be repeated over all $\binom{n}{n - i}$ combinations.
Two additional terms $(\cmax + 1)^i$ and $(\cmax)^{n - i}$ account for all possible integer values that clocks can assume: clocks with no fractional part can take any value from $0$ to $\cmax$ (\emph{i.e.}, $\cmax + 1$ possibilities), while the remaining clocks, having a fractional part greater than zero, are limited to the range $[0, \cmax)$ (\emph{i.e.}, $\cmax$ possibilities).

In the general case, it holds that $0 \le u \le n$.
The final result is obtained by combining the contributions of both bounded and unbounded clocks for each possible value of $u$.
The contribution from the bounded clocks is given by:
$\sum_{i = 0}^{n - u} (\cmax + 1)^i (\cmax)^{n - u - i} 
\binom{n}{n - u - i}
\sum_{k = 0}^{n - u - i} k! \genfrac\{\}{0pt}{}{n - u - i}{k}$.
The contribution from the unbounded clocks is:
$
\binom{n}{u}
\sum_{w = 0}^{u} w! \genfrac\{\}{0pt}{}{u}{w}
$, where $\binom{n}{u}$ accounts for the choice of which $u$ clocks are unbounded.
Multiplying the bounded and unbounded contributions for each fixed $u$, and summing over all $u \in [0, n]$, yields the maximum number of discrete predecessors of $\reg$ over a single transition.
\hfill
\qed
\end{proof}

\begin{theorem}
    Let $\aut$ be a \ac{ta} such that $n = \lvert \ackc \rvert$ and let $\reg$ be a region of $\aut$. If the order in which clocks become unbounded must be tracked, then the number of discrete predecessors of $\reg$ over a single transition is $O\left( (\frac{\cmax + 1}{\ln 2}) ^{n}\cdot (n+1)!\right)$.  
\end{theorem}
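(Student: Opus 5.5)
The plan is to start from the explicit bound of \autoref{lem:discrete_complexity} and estimate it asymptotically using the standard bound on Fubini (ordered Bell) numbers. Write $F(m) = \sum_{k=0}^{m} k! \genfrac\{\}{0pt}{}{m}{k}$ for the $m$-th Fubini number. Its asymptotic form is $F(m) \sim \frac{m!}{2(\ln 2)^{m+1}}$, and the uniform bound $F(m) \le \frac{m!}{(\ln 2)^{m+1}}$ holds for all $m \ge 0$. From the latter we get $F(m) \le C\, m!/(\ln 2)^{m}$ with the absolute constant $C = 1/\ln 2$. I will invoke this bound and not re-derive it; if needed, it follows from the exponential generating function $1/(2-e^z)$, whose dominant pole is at $z = \ln 2$.

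\textbf{Step 1.} I will simplify the bounded-clock factor for fixed $u$. With $m = n-u$ and $j = m - i$ (the number of clocks with positive fractional part), the inner sum becomes
\[
\sum_{j=0}^{m} (\cmax+1)^{m-j}\cmax^{j}\binom{n}{j}F(j).
\]
I bound $\cmax^{j} \le (\cmax+1)^{j}$ and $F(j) \le C j!/(\ln 2)^{j} \le C j!/(\ln 2)^{m}$. This gives at most $C\,(\cmax+1)^{m}(\ln 2)^{-m}\sum_{j=0}^{m}\binom{n}{j}j!$.

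\textbf{Step 2.} I will treat the unbounded factor the same way: $\binom{n}{u}F(u) \le C\binom{n}{u}u!/(\ln 2)^{u}$. Multiplying the two factors, the powers of $\ln 2$ combine to $(\ln 2)^{-n}$. Since $\cmax+1 \ge 1$, I also bound $(\cmax+1)^{n-u} \le (\cmax+1)^{n}$. The total is then at most
\[
C^2\left(\frac{\cmax+1}{\ln 2}\right)^{n} \sum_{u=0}^{n}\binom{n}{u}u!\sum_{j=0}^{n-u}\binom{n}{j}j!.
\]

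\textbf{Step 3, the main obstacle.} It remains to show that the purely combinatorial double sum is $O((n+1)!)$. I will use $\binom{n}{u}u! = n!/(n-u)!$ and $\binom{n}{j}j! = n!/(n-j)!$, so the double sum equals
\[
(n!)^2\sum_{u}\frac{1}{(n-u)!}\sum_{j\le n-u}\frac{1}{(n-j)!}.
\]
This naive estimate risks giving $(n!)^2$ rather than $(n+1)!$, so care is needed. The resolution is that the bounded and unbounded clocks come from disjoint sets. The honest count of choosing which $u$ clocks are unbounded and which $j$ of the remaining are fractional is $\binom{n}{u}\binom{n-u}{j}$, not $\binom{n}{u}\binom{n}{j}$. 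The proof of \autoref{lem:discrete_complexity} overcounts here, so I will first tighten the lemma's expression with this observation, which is justified by the lemma's own counting argument.

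With the corrected coefficient, the term for $(u,j)$ is $\frac{n!}{(n-u-j)!}\cdot\frac{u!\,j!}{u!\,j!}$. More cleanly, the number of ways to pick an ordered sequence of $u$ unbounded clocks and an ordered sequence of $j$ fractional clocks is $n!/(n-u-j)!$. Summing over $u+j = s$ gives $(s+1)\,n!/(n-s)!$. Summing over $s$ gives at most $n!\sum_{s}(s+1)/(n-s)! \le (n+1)\,n!\,e = e\,(n+1)!$.

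Combining the steps yields $O\bigl(((\cmax+1)/\ln 2)^{n}(n+1)!\bigr)$. If the reader insists on using the lemma verbatim, I would instead argue that the lemma's binomial is meant as the choice among the $n-u$ remaining clocks, and note that this is the interpretation under which the stated bound holds.
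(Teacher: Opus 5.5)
Your proof is correct, but it takes a somewhat different route from the paper.

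\textbf{The paper's route and the lemma's binomial.} The paper bounds $(\cmax+1)^i\cmax^{n-u-i}$ by $(\cmax+1)^{n-u}$. It then collapses the inner sum through the Fubini recurrence, $\sum_{i}\binom{n}{n-u-i}a_{n-u-i}=2a_{n-u}$. Finally it estimates the convolution $\sum_u\binom{n}{u}a_u a_{n-u}$ as $n+1$ terms, each of size $O(n!/(\ln 2)^n)$.

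That identity holds only with $\binom{n-u}{n-u-i}$. For $n=2$, $u=1$ the literal left side is $3\neq 2a_1$. So the paper silently relies on the same correction you make explicitly. Your correction is sound, and it can be justified in one line: every discrete predecessor over a single transition is a region in the source location, and the number of such regions is exactly the sum with $\binom{n-u}{j}$.

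\textbf{What your route buys.} Instead of the recurrence, you apply the uniform Fubini bound termwise and count the falling factorials $n!/(n-u-j)!$ directly. This gives an explicit constant $e/(\ln 2)^2$ that is uniform in $\cmax$. The paper's route instead ends with $(\cmax+1)^{n+1}$ and drops a factor $\cmax+1$, which is harmless only for fixed $\cmax$.

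The bound $a_m\le m!/(\ln 2)^{m+1}$ that you invoke needs no analytic input. It follows from the recurrence by a one-line induction, since $\sum_{k\ge 1}(\ln 2)^k/k!=e^{\ln 2}-1=1$.

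\textbf{Correction to your closing remark.} It is not true that the stated bound holds only under the corrected reading of the lemma. Keep the factor $\cmax^{j}$ and use $\frac{n!}{(n-j)!\,(n-u)!}\le\binom{n}{u}/i!$ with $i=n-u-j$. Then the verbatim expression is at most $\frac{4}{(\ln 2)^2}\bigl(\frac{\cmax+1}{\ln 2}\bigr)^n n!$ for $\cmax\ge 1$. It is only your cruder Step 1 estimates that are too weak for the verbatim lemma.
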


\begin{proof}
Let $S_n$  be the formula of \autoref{lem:discrete_complexity}. 

The value $a_u= \sum_{w = 0}^{u} w! \genfrac\{\}{0pt}{}{u}{w}$ is the $u$-th Fubini number.
Therefore,  
$$
\sum_{k = 0}^{n - u - i} k! \genfrac\{\}{0pt}{}{n - u - i}{k} = a_{n-u-i}.
$$
Moreover, $(\cmax + 1)^i (\cmax)^{n - u - i}< (\cmax+1)^{n - u}$.
Hence, $S_n$ is less than:
$$
S'_n=\sum_{u = 0}^{n} (\cmax + 1)^{n-u} \binom{n}{u} a_u \sum_{i = 0}^{n - u}\binom{n}{n - u - i} a_{n-u-i}.
$$
Given that $a_n = \sum_{i=1}^n \binom{n}{i}a_{n-i}$, then   $\sum_{i = 0}^{n - u}\binom{n}{n - u - i} a_{n-u-i} = 2a_{n-u}$.
Therefore, $ S'_n$ is equal to:
$$
2\sum_{u = 0}^{n} (\cmax + 1)^{n-u} \binom{n}{u} a_{n-u} a_u \le 
2 \sum_{u = 0}^{n} (\cmax + 1)^{n-u} \cdot \sum_{u = 0}^{n} \binom{n}{u} a_{n-u} a_u
$$
$$
\le 2(\cmax + 1)^{n+1}\cdot \sum_{u = 0}^{n} \binom{n}{u} a_{n-u} a_u.
$$
We want to estimate
$
T_n = \sum_{u = 0}^{n} \binom{n}{u} a_{n-u} a_u,
$ 
when $n\to \infty$.
Fubini numbers satisfy the asymptotic estimate
$a_i = O\left( \frac{i!}{(\ln 2)^i} \right).$
Using this estimate, we write:
$$
\binom{n}{u} a_{n-u} a_u = 
O\left( \! \binom{n}{u} \cdot \frac{(n-u)!}{(\ln 2)^{n-u}} \cdot \frac{u!}{(\ln 2)^u} \! \right) \!
= O\left( \! \binom{n}{u} \cdot (n-u)! \cdot u! \cdot \frac{1}{(\ln 2)^n} \! \right)  \! .
$$
Since $\binom{n}{u} \cdot (n-u)! \cdot u! = n!$, each term in the sum $T_n$ is
$O\left( \frac{n!}{(\ln 2)^n} \right).$

There are $n+1$ such terms, hence $T_n$ is:
$$
O\left( (n+1) \cdot \frac{n!}{(\ln 2)^n} \right) = O\left( \frac{ (n+1)!}{(\ln 2)^n} \right)
$$
\emph{i.e.}, $S_n$ is $O\left( (\frac{ \cmax + 1}{\ln 2})^{n} \cdot (n+1)! \right).  
$
\hfill \qed
\end{proof}


\section{Complete Experimental Evaluation}
\label{app:complete_experimental_evaluation}
This section provides a detailed description of the manually constructed benchmarks and reports the complete version of the tables discussed in \autoref{sec:implementation_and_experimental_evaluation}.

Tables report \acl{vt} (VT), \acl{et} (ET), memory (Mem, representing the Maximum Resident Set Size), and the number of stored regions and states.
\Red{\ac{et} measures the time elapsed between issuing a verification command and receiving the result from a tool.}
Timings are in seconds, memory in MB.
The size of a network is denoted by $K$ (for \flower{}, $K$ indicates the number of clocks).
All values are averages over five runs, with a timeout of 600 s set on \ac{et}.
Timeouts are reported as TO, $\varepsilon$ denotes times below 0.001 s, and out-of-memory events are denoted by OOM.
Bold cells indicate the fastest \ac{vt} (on ties, the fastest \ac{et}).
KO indicates that an error occurred during verification.

\subsection{Punctual guards benchmarks}

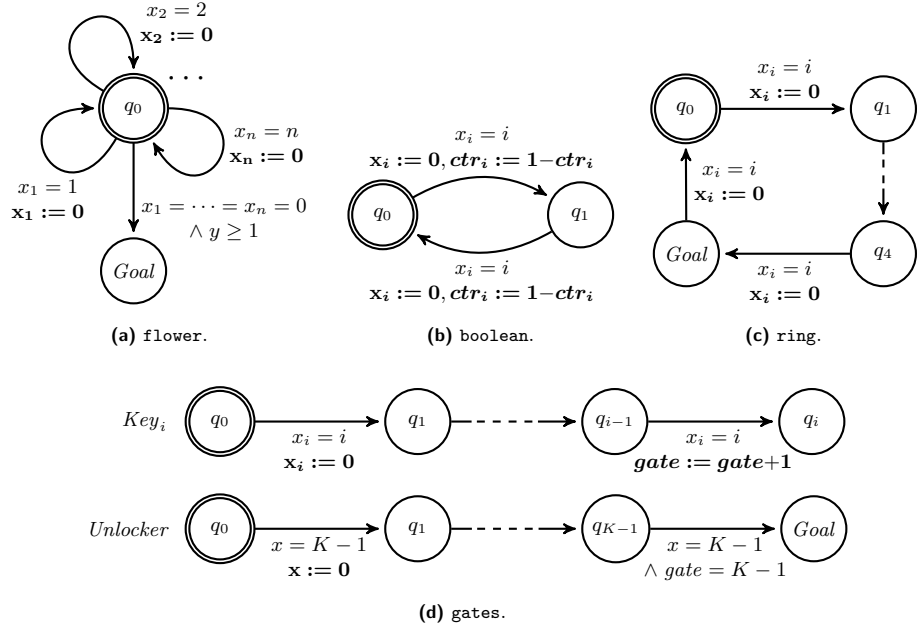
\begin{figure}[tb]
\centering
\begin{subfigure}{0.34\textwidth}
  \centering
  \begin{tikzpicture}[
    ->,
    >=stealth',
    shorten >=2pt, 
    auto,
    transform shape,
    align=center,
    scale=0.865,
    state/.style={thick, circle, draw, minimum size=1cm}
] 
\useasboundingbox (-1.9, -3.1) rectangle (2.7, 1.71);

  \node[state, accepting] (q) {$q_0$};
  \node[state, below of=q, yshift=-1.5cm] (goal) {$\mathit{Goal}$};
  \node[above=1cm of q, xshift=23pt, yshift=-37pt] (dots) {\Large $\mathbf{\cdots}$};
  \draw [thick] 
  (q) edge[loop, in=180, out=240, looseness=8]  node[below, xshift=0pt, yshift=-6pt]  {$x_1 = 1$ \\ $\resetClock{x_1}$} (q)
  
  (q) edge[loop, in=90, out=150, looseness=8] node[right, xshift=21pt, yshift=0pt] {$x_2 = 2$ \\ $\resetClock{x_2}$} (q)
  
  (q) edge[loop, in=300, out=0, looseness=8] node[right, xshift=1pt, yshift=5pt] {$x_n = n$ \\ $\resetClock{x_n}$} (q)
  
  (q) edge node[right, yshift=-13pt] {$x_1 = \cdots = x_n = 0$ \\ $\land \; y \geq 1$} (goal);

\end{tikzpicture}
  \caption{\flower{}.}
  \label{fig:flower_sub}
\end{subfigure}
\hfill
\begin{subfigure}{0.3\textwidth}
  \centering
  \begin{tikzpicture}[
    ->,
    >=stealth',
    shorten >=2pt, 
    auto,
    transform shape,
    align=center,
    scale=0.865,
    state/.style={thick, circle, draw, minimum size=1cm}
] 

  \node[state, accepting] (q) {$q_0$};
  \node[state, right=2cm of q] (q1) {$q_1$};
  
  \draw [thick] 
  (q) edge[in=150, out=30]  
  node[xshift=0pt, yshift=0pt]  
  {$x_i = i$ \\ $\mathbf{x}_{\boldsymbol{i}} \boldsymbol{:=} \mathbf{0}, \boldsymbol{ctr_i} \boldsymbol{:=} \mathbf{1} \mathbf{-} \boldsymbol{ctr_i}$} (q1)

  (q1) edge[in=330, out=210]  
  node[xshift=0pt, yshift=0pt]  
  {$x_i = i$ \\ $\mathbf{x}_{\boldsymbol{i}} \boldsymbol{:=} \mathbf{0}, \boldsymbol{ctr_i} \boldsymbol{:=} \mathbf{1} \mathbf{-} \boldsymbol{ctr_i}$} (q);

\end{tikzpicture}
  \caption{\boolea{}.}
  \label{fig:boolean_sub}
\end{subfigure}
\hfill
\begin{subfigure}{0.3\textwidth}
  \centering
  \begin{tikzpicture}[
    ->,
    >=stealth',
    shorten >=2pt, 
    auto,
    transform shape,
    align=center,
    scale=0.865,
    state/.style={thick, circle, draw, minimum size=1cm}
] 
 
  \node[state, accepting] (q) {$q_0$};
  \node[state, right=2cm of q] (q1) {$q_1$};
  \node[state, below=1.2cm of q1] (q2) {$q_4$};
  \node[state, left=2cm of q2] (q3) {$\mathit{Goal}$};
  
  \draw[thick, ->] (q) --  node[above] {$x_i = i$ \\ $\mathbf{x}_{\boldsymbol{i}} \boldsymbol{:=} \mathbf{0}$} (q1);
  \draw[thick, -] (q1) -- ($(q1)!0.35!(q2)$);
  \draw[thick, dashed, -] ($(q1)!0.35!(q2)$) -- ($(q1)!0.55!(q2)$);
  \draw[thick, ->] ($(q1)!0.55!(q2)$) -- (q2);
  \draw[thick, ->] (q2) --  node[below] {$x_i = i$ \\ $\mathbf{x}_{\boldsymbol{i}} \boldsymbol{:=} \mathbf{0}$} (q3);
  \draw[thick, ->] (q3) --  node[right] {$x_i = i$ \\ $\mathbf{x}_{\boldsymbol{i}} \boldsymbol{:=} \mathbf{0}$} (q);

\end{tikzpicture}
  \caption{\ring{}.}
  \label{fig:ring_sub}
\end{subfigure}

\vspace{1.5em}  

\begin{subfigure}{\textwidth}
  \centering
  \begin{tikzpicture}[
    ->,
    >=stealth',
    shorten >=2pt, 
    auto,
    transform shape,
    align=center,
    scale=0.865,
    state/.style={thick, circle, draw, minimum size=1cm}
] 
 
  \node[state, accepting] (q) {$q_0$};
  \node[state, left=0.1cm of q, draw=none] (k) {$\mathit{Key}_i$};
  \node[state, right=2cm of q] (q1) {$q_1$};
  \node[state, right=2cm of q1] (q2) {$q_{i-1}$};
  \node[state, right=2cm of q2] (q3) {$q_i$};
  
  \draw[thick, ->] (q) --  node[below] {$x_i = i$ \\ $\mathbf{x}_{\boldsymbol{i}} \boldsymbol{:=} \mathbf{0}$} (q1);
  \draw[thick, -] (q1) -- ($(q1)!0.3!(q2)$);
  \draw[thick, dashed, -] ($(q1)!0.3!(q2)$) -- ($(q1)!0.61!(q2)$);
  \draw[thick, ->] ($(q1)!0.61!(q2)$) -- (q2);
  \draw[thick, ->] (q2) --  node[below] {$x_i = i$ \\ $\boldsymbol{gate} \boldsymbol{:=} \boldsymbol{gate} \mathbf{+} \mathbf{1}$} (q3);
  
  \node[state, accepting, below=0.6cm of q] (p) {$q_0$};
  \node[state, left=0.1cm of p, draw=none] (u) {$\mathit{Unlocker}$};
  \node[state, right=2cm of p] (p1) {$q_1$};
  \node[state, right=2cm of p1] (p2) {$q_{K - 1}$};
  \node[state, right=2cm of p2] (p3) {$\mathit{Goal}$};
  
  \draw[thick, ->] (p) --  node[below] {$x = K - 1$ \\ $\resetClock{x}$} (p1);
  \draw[thick, -] (p1) -- ($(p1)!0.3!(p2)$);
  \draw[thick, dashed, -] ($(p1)!0.3!(p2)$) -- ($(p1)!0.61!(p2)$);
  \draw[thick, ->] ($(p1)!0.61!(p2)$) -- (p2);
  \draw[thick, ->] (p2) --  node[below] {$x = K - 1$ \\ $\land \; \mathit{gate} = K - 1$} (p3);

   
\end{tikzpicture}
  \caption{\gates{}.}
  \label{fig:gates_sub}
\end{subfigure}

\caption{Depiction of the benchmark \ac{ta} used in both \autoref{tab:punctual} and \autoref{tab:punctual_full}.
\boolea{}, \ring{}, and \gates{} are manually constructed networks of \ac{ta}, whereas \flower{} is a single \ac{ta} adapted from~\cite{J_rgensen_2012} (it is included here for completeness).}
\label{fig:overall}
\end{figure}

\autoref{fig:overall} depicts the manually created benchmarks used in \autoref{tab:punctual} and \autoref{tab:punctual_full}.
Those are manually constructed networks of \ac{ta}, except for \flower{}, which is a single \ac{ta} adapted from~\cite{J_rgensen_2012}.

\textbf{Flower}~The \flower{} \ac{ta} (\autoref{fig:flower_sub}) has two locations: $q_0$ (initial) and $\mathit{Goal}$, and it includes $n + 1$ clocks, denoted by $x_1, \dots, x_n, y$. 
Over $q_0$ there are $n$ self-loops, one for each clock $x_i$. 
In each self-loop, $x_i$ is compared to the constant $i$ (\emph{i.e.}, the guard is $x_i = i$) and reset along such transition.
To reach $\mathit{Goal}$, all clocks $x_i$ must simultaneously be exactly zero (with no fractional part), while $y$ must satisfy $y \ge 1$. 
Consequently, the integer value of $y$ when transitioning to $\mathit{Goal}$ is equal to the least common multiple of $1, 2, \dots, n$.

\textbf{Boolean}~The \boolea{} network consists of $K$ \acp{ta} of the form shown in \autoref{fig:boolean_sub}.  
For every index $i$, with $1 \le i \le K$, the $i$-th \ac{ta} flips the value of the integer variable $\mathit{ctr}_i$ between $0$ and $1$ whenever it transitions between location $q_0$ and location $q_1$.  
All integer variables are initialized to $0$ at the start of execution.
The query  
$
\exists \Diamond( \mathit{ctr}_1 = 1 \land \mathit{ctr}_2 = 1 \land \dots \land \mathit{ctr}_K = 1 )
$
is satisfied only if every variable $\mathit{ctr}_i$ is set to $1$.  
Since TChecker does not support integer variables in reachability queries, an equivalent property was formulated using locations:
$
\exists \Diamond( \mathit{Boolean}_1.q_1 \land \mathit{Boolean}_2.q_1 \land \dots \land \mathit{Boolean}_K.q_1 ).
$

\textbf{Ring}~The \ring{} network consists of $K$ \acp{ta} of the form shown in \autoref{fig:ring_sub}.
For brevity, the dashed edge abbreviates the additional locations and identical transitions that are not depicted (each \ac{ta} includes 6 locations and 6 transitions).

\textbf{Gates}~The \gates{} network consists of $K-1$ $\mathit{Key}_i$ \acp{ta}, for $1 \le i \le K-1$, and an $\mathit{Unlocker}$ \ac{ta}, as depicted in \autoref{fig:gates_sub}.  
For brevity, the dashed edges abbreviate additional locations and identical transitions that are not depicted.  
The total number of locations and transitions depends on the index $i$ for each $\mathit{Key}_i$ and on $K$ for the $\mathit{Unlocker}$.
Each $\mathit{Key}_i$ increments the integer variable $\mathit{gate}$ by $1$ during its last transition.  
The $\mathit{Unlocker}$ can reach the $\mathit{Goal}$ location only if $\mathit{gate}$ equals $K-1$, allowing its last transition to fire.

\subsection{Full benchmark tables}

\definecolor{lightgray}{gray}{0.95}
\setlength{\aboverulesep}{1pt}
\setlength{\belowrulesep}{1pt}
{\renewcommand{\arraystretch}{0.95}
{\setlength{\tabcolsep}{3.25pt}
\begin{table}[tb]
\centering
\caption{When all transitions have punctual guards, regions may outperform zones.
Here, \flower{} is a single \ac{ta}, while the others are networks of \ac{ta}.}
\label{tab:punctual_full}
\resizebox{\textwidth}{!}{
\begin{tabular}{%
    c|rrrr|rrrr|rrrr
}
\toprule
\multirow{2}{*}{\textbf{K}}
  & \multicolumn{4}{c|}{\textbf{TARZAN}}
  & \multicolumn{4}{c|}{\textbf{TChecker}}
  & \multicolumn{4}{c}{\textbf{UPPAAL}} \\
  & \multicolumn{1}{c}{\textbf{VT}}
  & \multicolumn{1}{c}{\textbf{ET}}
  & \multicolumn{1}{c}{\textbf{Mem}}
  & \multicolumn{1}{c|}{\textbf{Regions}}
  & \multicolumn{1}{c}{\textbf{VT}}
  & \multicolumn{1}{c}{\textbf{ET}}
  & \multicolumn{1}{c}{\textbf{Mem}}
  & \multicolumn{1}{c|}{\textbf{States}}
  & \multicolumn{1}{c}{\textbf{VT}}
  & \multicolumn{1}{c}{\textbf{ET}}
  & \multicolumn{1}{c}{\textbf{Mem}}
  & \multicolumn{1}{c}{\textbf{States}} \\
\midrule
\multicolumn{13}{c}{$\blacktriangleright$
{\normalsize \boolea{}}~\cite{tarzan2025} ::
[Query: $\exists \Diamond( \mathit{ctr}_1 = 1 \land \mathit{ctr}_2 = 1 \land \dots \land \mathit{ctr}_K = 1 )$]
$\blacktriangleleft$} \\
\midrule
\rowcolor{white}
2  & $\varepsilon$ & 0.013 & 7.62 & 9
   & ${\boldsymbol \varepsilon}$ & \textbf{0.005} & \textbf{5.92} & \textbf{6}
   & $\varepsilon$ & 0.051 & 18.80 & 6 \\
\rowcolor{lightgray}
4  & $\varepsilon$ & 0.013 & 7.78 & 24
   & ${\boldsymbol \varepsilon}$ & \textbf{0.007} & \textbf{6.47} & \textbf{185}
   & 0.001 & 0.052 & 18.85 & 185 \\
\rowcolor{white}
6  & \textbf{0.003} & \textbf{0.016} & \textbf{13.42} & \textbf{425}
   & 0.088 & 0.096 & 13.80 & 8502
   & 0.075 & 0.128 & 19.91 & 8502 \\
\rowcolor{lightgray}
8  & \textbf{0.013} & \textbf{0.027} & \textbf{29.10} & \textbf{1009}
   & 75.461 & 75.980 & 220.82 & 566763
   & 77.204 & 77.523 & 81.79 & 566763 \\
\rowcolor{white}
10  & \textbf{0.070} & \textbf{0.086} & \textbf{110.29} & \textbf{4050}
   & --- & TO & --- & ---
   & --- & TO & --- & --- \\
\rowcolor{lightgray}
12  & \textbf{0.341} & \textbf{0.361} & \textbf{476.73} & \textbf{15961}
   & --- & TO & --- & ---
   & --- & TO & --- & --- \\
\rowcolor{white}
14  & \textbf{0.737} & \textbf{0.764} & \textbf{965.29} & \textbf{28679}
   & --- & TO & --- & ---
   & --- & TO & --- & --- \\
\rowcolor{lightgray}
16  & \textbf{1.578} & \textbf{1.616} & \textbf{1788.46} & \textbf{43919}
   & --- & TO & --- & ---
   & --- & TO & --- & --- \\
\midrule
\multicolumn{13}{c}{$\blacktriangleright$
{\normalsize \flower{}}~\cite{J_rgensen_2012} :: 
[Query: $\exists \Diamond ( \mathit{Flower.Goal} )$]
$\blacktriangleleft$} \\
\midrule
\rowcolor{white}
3  & $\varepsilon$ & 0.013 & 7.64 & 13
   & ${\boldsymbol \varepsilon}$ & \textbf{0.005} & \textbf{5.84} & \textbf{7}
   & $\varepsilon$ & 0.055 & 18.80 & 7 \\
\rowcolor{lightgray}
5  & ${\boldsymbol \varepsilon}$ & \textbf{0.013} & \textbf{7.92} & \textbf{95}
   & 0.001 & 0.007 & 6.35 & 150
   & 0.001 & 0.056 & 18.82 & 150 \\
\rowcolor{white}
7  & ${\boldsymbol \varepsilon}$ & \textbf{0.022} & \textbf{10.65} & \textbf{573}
   & 0.322 & 0.330 & 13.12 & 3508
   & 0.399 & 0.455 & 19.76 & 3508 \\
\rowcolor{lightgray}
9  & \textbf{0.007} & \textbf{0.022} & \textbf{58.62} & \textbf{9161}
   & --- & TO & --- & ---
   & --- & TO & --- & --- \\
\rowcolor{white}
11  & \textbf{0.028} & \textbf{0.044} & \textbf{177.34} & \textbf{30761}
   & --- & TO & --- & ---
   & --- & TO & --- & --- \\
\rowcolor{lightgray}
13  & \textbf{0.412} & \textbf{0.450} & \textbf{2034.18} & \textbf{370331}
   & --- & TO & --- & ---
   & --- & TO & --- & --- \\
\rowcolor{white}
15  & \textbf{7.188} & \textbf{7.398} & \textbf{12900.36} & \textbf{5142671}
   & --- & TO & --- & ---
   & --- & TO & --- & --- \\
\rowcolor{lightgray}
17  & \textbf{16.212} & \textbf{16.562} & \textbf{12942.48} & \textbf{11023829}
   & --- & TO & --- & ---
   & --- & TO & --- & --- \\
\midrule
\multicolumn{13}{c}{$\blacktriangleright$
{\normalsize \gates{}}~\cite{tarzan2025} ::
[Query: $\exists \Diamond ( \mathit{Unlocker.Goal} )$]
$\blacktriangleleft$} \\
\midrule
\rowcolor{white}
3  & $\varepsilon$ & 0.013 & 7.79 & 26
   & ${\boldsymbol \varepsilon}$ & \textbf{0.006} & \textbf{5.99} & \textbf{18}
   & $\varepsilon$ & 0.054 & 18.86 & 18 \\
\rowcolor{lightgray}
5  & ${\boldsymbol \varepsilon}$ & \textbf{0.013} & \textbf{8.84} & \textbf{76}
   & 0.001 & 0.008 & 7.19 & 549
   & 0.001 & 0.055 & 19.01 & 549 \\
\rowcolor{white}
7  & ${\boldsymbol \varepsilon}$ & \textbf{0.014} & \textbf{11.23} & \textbf{151}
   & 0.110 & 0.122 & 21.08 & 32979
   & 0.035 & 0.092 & 22.79 & 32979 \\
\rowcolor{lightgray}
9  & \textbf{0.001} & \textbf{0.015} & \textbf{15.43} & \textbf{251}
   & 47.895 & 48.771 & 806.72 & 3071646
   & 5.228 & 6.509 & 451.71 & 3071646 \\
\rowcolor{white}
11  & \textbf{0.002} & \textbf{0.016} & \textbf{21.87} & \textbf{379}
   & --- & TO & --- & ---
   & --- & TO & --- & --- \\
\rowcolor{lightgray}
13  & \textbf{0.004} & \textbf{0.018} & \textbf{31.31} & \textbf{537}
   & --- & TO & --- & ---
   & --- & TO & --- & --- \\
\rowcolor{white}
15  & \textbf{0.006} & \textbf{0.020} & \textbf{44.01} & \textbf{708}
   & --- & TO & --- & ---
   & --- & TO & --- & --- \\
\rowcolor{lightgray}
17  & \textbf{0.009} & \textbf{0.023} & \textbf{60.85} & \textbf{922}
   & --- & TO & --- & ---
   & --- & TO & --- & --- \\
\midrule
\multicolumn{13}{c}{$\blacktriangleright$
{\normalsize \ring{}}~\cite{tarzan2025} ::
[Query: $\exists \Diamond ( P_0.\mathit{Goal} \land P_1.\mathit{Goal} \land ... \land P_K.\mathit{Goal} )$]
$\blacktriangleleft$} \\
\midrule
\rowcolor{white}
2  & $\varepsilon$ & 0.013 & 7.84 & 59
   & ${\boldsymbol \varepsilon}$ &\textbf{0.006} & \textbf{6.14} & \textbf{45}
   & $\varepsilon$ & 0.054 & 18.83 & 45 \\
\rowcolor{lightgray}
4  & \textbf{0.001} & \textbf{0.015} & \textbf{12.40} & \textbf{542}
   & 0.020 & 0.029 & 11.56 & 7738
   & 0.010 & 0.065 & 19.52 & 7738 \\
\rowcolor{white}
6  & \textbf{0.011} & \textbf{0.025} & \textbf{48.96} & \textbf{3397}
   & 12.137 & 12.561 & 268.93 & 1180575
   & 3.406 & 3.715 & 128.87 & 1180575 \\
\rowcolor{lightgray}
8  & \textbf{0.204} & \textbf{0.225} & \textbf{687.77} & \textbf{47063}
   & --- & TO & --- & ---
   & --- & TO & --- & --- \\
\rowcolor{white}
10  & \textbf{0.272} & \textbf{0.295} & \textbf{897.00} & \textbf{52669}
   & --- & TO & --- & ---
   & --- & TO & --- & --- \\
\rowcolor{lightgray}
12  & \textbf{6.901} & \textbf{7.097} & \textbf{13049.36} & \textbf{1067623}
   & --- & TO & --- & ---
   & --- & TO & --- & --- \\
\rowcolor{white}
14  & --- & --- & OOM & ---
   & --- & TO & --- & ---
   & --- & TO & --- & --- \\
\rowcolor{lightgray}
16  & --- & --- & OOM & ---
   & --- & TO & --- & ---
   & --- & TO & --- & --- \\
\bottomrule
\end{tabular}
}
\end{table}
}
}

\definecolor{lightgray}{gray}{0.95}
\setlength{\aboverulesep}{1pt}
\setlength{\belowrulesep}{1pt}
{\renewcommand{\arraystretch}{0.95}
{\setlength{\tabcolsep}{3.25pt}
\begin{table}[tb]
\centering
\caption{The results are derived from networks of closed \acp{ta}, where guards may either be punctual or contain non-strict inequalities.
In this setting, \toolname{} can handle both large networks (\medical{}) and large constants (\mpeg{}).}
\label{tab:closed_ta_full}
\resizebox{\textwidth}{!}{
\begin{tabular}{%
    c|rrrr|rrrr|rrrr
}
\toprule
\multirow{2}{*}{\textbf{K}}
  & \multicolumn{4}{c|}{\textbf{TARZAN}}
  & \multicolumn{4}{c|}{\textbf{TChecker}}
  & \multicolumn{4}{c}{\textbf{UPPAAL}} \\
  & \multicolumn{1}{c}{\textbf{VT}}
  & \multicolumn{1}{c}{\textbf{ET}}
  & \multicolumn{1}{c}{\textbf{Mem}}
  & \multicolumn{1}{c|}{\textbf{Regions}}
  & \multicolumn{1}{c}{\textbf{VT}}
  & \multicolumn{1}{c}{\textbf{ET}}
  & \multicolumn{1}{c}{\textbf{Mem}}
  & \multicolumn{1}{c|}{\textbf{States}}
  & \multicolumn{1}{c}{\textbf{VT}}
  & \multicolumn{1}{c}{\textbf{ET}}
  & \multicolumn{1}{c}{\textbf{Mem}}
  & \multicolumn{1}{c}{\textbf{States}} \\
\midrule
\multicolumn{13}{c}{$\blacktriangleright$
{\normalsize \medical{}}~\cite{tapaal2025} ::
[Query: $\exists \Diamond ( \mathit{Patient}_1.\mathit{Done} \land \mathit{Patient}_2.\mathit{Done} \land \dots \land \mathit{Patient}_{K - 2}.\mathit{Done} ) $]
$\blacktriangleleft$} \\
\midrule
\rowcolor{white}
12  & 0.007 & 0.021 & 33.39 & 660
   & 0.011 & 0.030 & 22.63 & 639
   & \textbf{0.002} & \textbf{0.059} & \textbf{20.83} & \textbf{582} \\
\rowcolor{lightgray}
22  & 0.056 & 0.072 & 204.97 & 2715
   & 0.316 & 0.560 & 173.75 & 3774
   & \textbf{0.032} & \textbf{0.093} & \textbf{31.76} & \textbf{2862} \\
\rowcolor{white}
32  & \textbf{0.218} & \textbf{0.241} & \textbf{802.55} & \textbf{7170}
   & 2.693 & 3.591 & 947.61 & 11409
   & 0.232 & 0.305 & 79.99 & 7842 \\
\rowcolor{lightgray}
42  & \textbf{0.609} & \textbf{0.647} & \textbf{2271.65} & \textbf{15025}
   & 13.571 & 16.096 & 2517.18 & 25544
   & 1.275 & 1.379 & 230.07 & 16522 \\
\rowcolor{white}
52  & \textbf{1.432} & \textbf{1.495} & \textbf{5210.36} &\textbf{27280}
   & 48.690 & 55.370 & 5724.98 & 48179
   & 5.524 & 5.695 & 598.03 & 29902 \\
\rowcolor{lightgray}
62  & \textbf{2.860} & \textbf{2.983} & \textbf{10440.43} & \textbf{44935}
   & 141.818 & 156.903 & 10778.80 & 81314
   & 19.325 & 19.648 & 1366.31 & 48982 \\
\rowcolor{white}
72  & \textbf{5.781} & \textbf{5.959} & \textbf{12891.17} & \textbf{68990}
   & 355.827 & 389.223 & 16109.45 & 126949
   & 54.200 & 54.838 & 2799.57 & 74762 \\
\rowcolor{lightgray}
82  & \textbf{10.622} & \textbf{10.889} & \textbf{13519.50} & \textbf{100445}
   & --- & TO & --- & ---
   & 133.114 & 134.367 & 5257.41 & 108242 \\
\rowcolor{white}
92  & \textbf{17.877} & \textbf{18.264} & \textbf{13811.47} & \textbf{140300}
   & KO & KO & KO & KO
   & 293.285 & 295.567 & 9219.25 & 150422 \\
\rowcolor{lightgray}
102  & \textbf{27.918} & \textbf{28.466} & \textbf{13780.57} & \textbf{189555}
   & KO & KO & KO & KO
   & 593.317 & 597.306 & 12364.67 & 202302 \\
\midrule
\multicolumn{13}{c}{$\blacktriangleright$
{\normalsize \mpeg{}}~\cite{tapaal2025} ::
[Query: $\exists \Diamond ( \mathit{BFrame}_1.\mathit{Bout} \land \mathit{BFrame}_2.\mathit{Bout} \land \dots \land \mathit{BFrame}_{K - 4}.\mathit{Bout} )$]
$\blacktriangleleft$} \\
\midrule
\rowcolor{white}
8  & 0.365 & 0.385 & 451.46 & 26916
   & 0.003 & 0.014 & 8.96 & 531
   & \textbf{0.001} & \textbf{0.058} & \textbf{19.46} & \textbf{537} \\
\rowcolor{lightgray}
12  & \textbf{0.796} & \textbf{0.819} & \textbf{718.63} & \textbf{27238}
   & 8.405 & 8.663 & 252.70 & 383103
   & 1.621 & 1.766 & 112.33 & 383109 \\
\rowcolor{white}
16  & \textbf{1.364} & \textbf{1.392} & \textbf{1032.46} & \textbf{27704}
   & --- & TO & --- & ---
   & --- & TO & --- & --- \\
\rowcolor{lightgray}
20  & \textbf{2.521} & \textbf{2.558} & \textbf{1575.82} & \textbf{28314}
   & --- & TO & --- & ---
   & --- & TO & --- & --- \\
\rowcolor{white}
24  & \textbf{3.865} & \textbf{3.910} & \textbf{2085.50} & \textbf{29068}
   & --- & TO & --- & ---
   & --- & TO & --- & --- \\
\bottomrule
\end{tabular}
}
\end{table}
}
}

\definecolor{lightgray}{gray}{0.95}
\setlength{\aboverulesep}{1pt}
\setlength{\belowrulesep}{1pt}
{\renewcommand{\arraystretch}{0.95}
{\setlength{\tabcolsep}{3.25pt}
\begin{table}[tb]
\centering
\caption{Large constants negatively impact the \ac{vt} of regions, unlike zones.
Here, the first row of \pagerank{} corresponds to the variant with truncated constants.}
\label{tab:big_constants_full}
\resizebox{\textwidth}{!}{
\begin{tabular}{%
    c|rrrr|rrrr|rrrr
}
\toprule
\multirow{2}{*}{\textbf{K}}
  & \multicolumn{4}{c|}{\textbf{TARZAN}}
  & \multicolumn{4}{c|}{\textbf{TChecker}}
  & \multicolumn{4}{c}{\textbf{UPPAAL}} \\
  & \multicolumn{1}{c}{\textbf{VT}}
  & \multicolumn{1}{c}{\textbf{ET}}
  & \multicolumn{1}{c}{\textbf{Mem}}
  & \multicolumn{1}{c|}{\textbf{Regions}}
  & \multicolumn{1}{c}{\textbf{VT}}
  & \multicolumn{1}{c}{\textbf{ET}}
  & \multicolumn{1}{c}{\textbf{Mem}}
  & \multicolumn{1}{c|}{\textbf{States}}
  & \multicolumn{1}{c}{\textbf{VT}}
  & \multicolumn{1}{c}{\textbf{ET}}
  & \multicolumn{1}{c}{\textbf{Mem}}
  & \multicolumn{1}{c}{\textbf{States}} \\
\midrule
\multicolumn{13}{c}{$\blacktriangleright$
{\normalsize \csma{}}~\cite{kiviriga2021randomized} ::
[Query: $\exists \Diamond \left( \parbox{7.5cm}{\begin{math}P_1.\mathit{senderRetry} \land P_2.\mathit{senderRetry} \land P_3.\mathit{senderTransm} \end{math} \\ \begin{math}\land \; P_4.\mathit{senderRetry} \land \dots \land P_7.\mathit{senderRetry}\end{math}} \right)$]
$\blacktriangleleft$} \\
\midrule
\rowcolor{white}
21  & 0.198 & 0.215 & 364.96 & 20936
   & 0.004 & 0.038 & 30.32 & 232
   & \textbf{0.001} & \textbf{0.061} & \textbf{20.24} & \textbf{269} \\
\rowcolor{lightgray}
23  & 0.262 & 0.281 & 482.81 & 25422
   & 0.005 & 0.053 & 40.75 & 254
   & \textbf{0.001} & \textbf{0.061} & \textbf{20.10} & \textbf{318} \\
\rowcolor{white}
26  & 0.388 & 0.409 & 700.02 & 32931
   & 0.008 & 0.079 & 62.17 & 287
   & \textbf{0.002} & \textbf{0.062} & \textbf{20.64} & \textbf{399} \\
\rowcolor{lightgray}
31  & 0.674 & 0.701 & 1194.45 & 47526
   & 0.018 & 0.218 & 117.03 & 371
   & \textbf{0.002} & \textbf{0.064} & \textbf{21.11} & \textbf{554} \\
\rowcolor{white}
51  & 3.364 & 3.431 & 5351.35 & 131906
   & 0.293 & 1.386 & 837.12 & 1381
   & \textbf{0.010} & \textbf{0.077} & \textbf{23.89} & \textbf{1424} \\
\midrule
\multicolumn{13}{c}{$\blacktriangleright$
{\normalsize \pagerank{}}~\cite{Baresi2020} ::
[Query: $\exists \Diamond ( \mathit{Stage}_7.\mathit{Completed} )$]
$\blacktriangleleft$} \\
\midrule
\rowcolor{white}
9  & 0.068 & 0.085 & 179.33 & 4193
   & 0.002 & 0.086 & 58.54 & 1129
   & \textbf{0.002} & \textbf{0.076} & \textbf{22.89} & \textbf{1129} \\
\rowcolor{lightgray}
9  & 4.227 & 4.361 & 10950.36 & 257311
   & 0.002 & 0.085 & 58.53 & 1129
   & \textbf{0.002} & \textbf{0.076} & \textbf{23.01} & \textbf{1129} \\
\bottomrule
\end{tabular}
}
\end{table}
}
}

\definecolor{lightgray}{gray}{0.95}
\setlength{\aboverulesep}{1pt}
\setlength{\belowrulesep}{1pt}
{\renewcommand{\arraystretch}{0.95}
{\setlength{\tabcolsep}{3.25pt}
\begin{table}[tb]
\centering
\caption{Safety properties require \toolname{} to visit the entire reachable state space.}
\label{tab:full_state_space_full}
\resizebox{\textwidth}{!}{
\begin{tabular}{%
    c|rrrr|rrrr|rrrr
}
\toprule
\multirow{2}{*}{\textbf{K}}
  & \multicolumn{4}{c|}{\textbf{TARZAN}}
  & \multicolumn{4}{c|}{\textbf{TChecker}}
  & \multicolumn{4}{c}{\textbf{UPPAAL}} \\
  & \multicolumn{1}{c}{\textbf{VT}}
  & \multicolumn{1}{c}{\textbf{ET}}
  & \multicolumn{1}{c}{\textbf{Mem}}
  & \multicolumn{1}{c|}{\textbf{Regions}}
  & \multicolumn{1}{c}{\textbf{VT}}
  & \multicolumn{1}{c}{\textbf{ET}}
  & \multicolumn{1}{c}{\textbf{Mem}}
  & \multicolumn{1}{c|}{\textbf{States}}
  & \multicolumn{1}{c}{\textbf{VT}}
  & \multicolumn{1}{c}{\textbf{ET}}
  & \multicolumn{1}{c}{\textbf{Mem}}
  & \multicolumn{1}{c}{\textbf{States}} \\
\midrule
\multicolumn{13}{c}{$\blacktriangleright$
{\normalsize \fischer{}}~\cite{uppaalSite} ::
[Query: $\exists \Diamond ( \mathit{Fischer}_1.\mathit{cs} \land \mathit{Fischer}_2.\mathit{cs} ) $]
$\blacktriangleleft$} \\
\midrule
\rowcolor{white}
2  & 0.001 & 0.015 & 9.60 & 542
   & ${\boldsymbol \varepsilon}$ & \textbf{0.006} & \textbf{6.01} & \textbf{18}
   & 0.001 & 0.055 & 18.87 & 18 \\
\rowcolor{lightgray}
3  & 0.025 & 0.039 & 55.62 & 9324
   & ${\boldsymbol \varepsilon}$ & \textbf{0.006} & \textbf{6.26} & \textbf{65}
   & 0.001 & 0.055 & 18.91 & 65 \\
\rowcolor{white}
4  & 0.644 & 0.669 & 1098.36 & 180032
   & ${\boldsymbol \varepsilon}$ & \textbf{0.007} & \textbf{6.75} & \textbf{220}
   & 0.001 & 0.055 & 18.96 & 220 \\
\rowcolor{lightgray}
5  & 19.627 & 19.849 & 12948.99 & 3920652
   & 0.004 & 0.011 & 9.30 & 727
   & \textbf{0.002} & \textbf{0.057} & \textbf{19.07} & \textbf{727} \\
\rowcolor{white}
6  & --- & --- & OOM & ---
   & 0.024 & 0.032 & 12.25 & 2378
   & \textbf{0.008} & \textbf{0.064} & \textbf{19.43} & \textbf{2378} \\
\midrule
\multicolumn{13}{c}{$\blacktriangleright$
{\normalsize \lynch{}}~\cite{farkas2018towards} ::
[Query: $\exists \Diamond ( P_1.\mathit{CS}7 \land P_2.\mathit{CS}7 ) $]
$\blacktriangleleft$} \\
\midrule
\rowcolor{white}
2  & 0.076 & 0.091 & 152.12 & 32268
   & ${\boldsymbol \varepsilon}$ & \textbf{0.007} & \textbf{6.19} & \textbf{38}
   & $\varepsilon$ & 0.058 & 18.95 & 38 \\
\rowcolor{lightgray}
3  & 20.137 & 20.373 & 12857.20 & 5289449
   & ${\boldsymbol \varepsilon}$ & \textbf{0.008} & \textbf{6.49} & \textbf{125}
   & 0.001 & 0.056 & 19.05 & 125 \\
\rowcolor{white}
4  & --- & --- & OOM & ---
   & \textbf{0.001} & \textbf{0.010} & \textbf{7.38} & \textbf{380}
   & 0.001 & 0.057 & 19.15 & 380 \\
\bottomrule
\end{tabular}
}
\end{table}
}
}

\definecolor{lightgray}{gray}{0.95}
\setlength{\aboverulesep}{1pt}
\setlength{\belowrulesep}{1pt}
{\renewcommand{\arraystretch}{0.95}
{\setlength{\tabcolsep}{3.25pt}
\begin{table}[tb]
\centering
\caption{Symmetry reduction renders some benchmarks tractable with regions. For each benchmark, symmetric \ac{ta} are identical. The maximum constant for each $\mathit{Viking}_i$ in \bridge{} is 20. The \train{} benchmark was run in \ac{bfs}.}
\label{tab:symmetry}
\resizebox{\textwidth}{!}{
\begin{tabular}{%
    c|rrrr|rrrr|rrrr
}
\toprule
\multirow{2}{*}{\textbf{K}}
  & \multicolumn{4}{c|}{\textbf{TARZAN}}
  & \multicolumn{4}{c|}{\textbf{TChecker}}
  & \multicolumn{4}{c}{\textbf{UPPAAL}} \\
  & \multicolumn{1}{c}{\textbf{VT}}
  & \multicolumn{1}{c}{\textbf{ET}}
  & \multicolumn{1}{c}{\textbf{Mem}}
  & \multicolumn{1}{c|}{\textbf{Regions}}
  & \multicolumn{1}{c}{\textbf{VT}}
  & \multicolumn{1}{c}{\textbf{ET}}
  & \multicolumn{1}{c}{\textbf{Mem}}
  & \multicolumn{1}{c|}{\textbf{States}}
  & \multicolumn{1}{c}{\textbf{VT}}
  & \multicolumn{1}{c}{\textbf{ET}}
  & \multicolumn{1}{c}{\textbf{Mem}}
  & \multicolumn{1}{c}{\textbf{States}} \\
\midrule
\multicolumn{13}{c}{$\blacktriangleright$
{\normalsize \train{}}~\cite{farkas2018towards} ::
[Query: $\exists \Diamond ( \mathit{Controller.controller3} ) $]
$\blacktriangleleft$} \\
\midrule
\rowcolor{white}
6  & 0.028 & 0.042 & 57.87 & 3803
   & 0.003 & 0.011 & 7.89 & 970
   & ${\boldsymbol \varepsilon}$ & \textbf{0.055} & \textbf{19.09} & \textbf{29} \\
\rowcolor{lightgray}
10  & 0.227 & 0.249 & 261.72 & 18672
   & 1.993 & 2.021 & 91.89 & 110096
   & \textbf{0.001} & \textbf{0.056} & \textbf{19.17} & \textbf{30} \\
\rowcolor{white}
14  & 0.553 & 0.572 & 370.96 & 33907
   & 159.005 & 159.662 & 2347.05 & 2087686
   & \textbf{0.001} & \textbf{0.059} & \textbf{19.26} & \textbf{30} \\
\rowcolor{lightgray}
18  & 0.977 & 0.997 & 464.34 & 49143
   & --- & TO & --- & ---
   & \textbf{0.001} & \textbf{0.057} & \textbf{19.30} & \textbf{30} \\
\rowcolor{white}
22  & 1.520 & 1.542 & 559.03 & 64379
   & --- & TO & --- & ---
   & \textbf{0.002} & \textbf{0.057} & \textbf{19.40} & \textbf{30} \\
\midrule
\multicolumn{13}{c}{$\blacktriangleright$
{\normalsize \bridge{}}~\cite{uppaalSite} ::
[Query: $\exists \Diamond ( \mathit{Viking}_1.\mathit{safe} \land \mathit{Viking}_2.\mathit{safe} \land \dots \land \mathit{Viking}_{K - 1}.\mathit{safe} ) $]
$\blacktriangleleft$} \\
\midrule
\rowcolor{white}
5  & 0.004 & 0.017 & 19.00 & 852
   & ${\boldsymbol \varepsilon}$ & \textbf{0.006} & \textbf{6.24} & \textbf{41}
   & $\varepsilon$ & 0.055 & 19.03 & 23 \\
\rowcolor{lightgray}
9  & 0.018 & 0.033 & 68.88 & 2271
   & ${\boldsymbol \varepsilon}$ & \textbf{0.008} & \textbf{6.44} & \textbf{149}
   & 0.001 & 0.057 & 19.07 & 55 \\
\rowcolor{white}
13  & 0.045 & 0.060 & 158.88 & 3883
   & ${\boldsymbol \varepsilon}$ & \textbf{0.011} & \textbf{6.97} & \textbf{321}
   & 0.001 & 0.056 & 19.13 & 87 \\
\rowcolor{lightgray}
17  & 0.086 & 0.104 & 286.59 & 5687
   & \textbf{0.002} & \textbf{0.017} & \textbf{17.07} & \textbf{557}
   & 0.002 & 0.058 & 19.17 & 119 \\
\rowcolor{white}
21  & 0.144 & 0.163 & 448.89 & 7683
   & 0.004 & 0.030 & 31.56 & 857
   & \textbf{0.003} & \textbf{0.060} & \textbf{19.26} & \textbf{151} \\
\bottomrule
\end{tabular}
}
\end{table}
}
}

\definecolor{lightgray}{gray}{0.95}
\setlength{\aboverulesep}{1pt}
\setlength{\belowrulesep}{1pt}
{\renewcommand{\arraystretch}{0.95}
{\setlength{\tabcolsep}{3.25pt}
\begin{table}[tb]
\centering
\caption{In small to medium-sized benchmarks, regions perform comparably with zones.
Here, \exSITH{} and \simple{} are single \ac{ta}.}
\label{tab:others}
\resizebox{\textwidth}{!}{
\begin{tabular}{%
    c|rrrr|rrrr|rrrr
}
\toprule
\multirow{2}{*}{\textbf{K}}
  & \multicolumn{4}{c|}{\textbf{TARZAN}}
  & \multicolumn{4}{c|}{\textbf{TChecker}}
  & \multicolumn{4}{c}{\textbf{UPPAAL}} \\
  & \multicolumn{1}{c}{\textbf{VT}}
  & \multicolumn{1}{c}{\textbf{ET}}
  & \multicolumn{1}{c}{\textbf{Mem}}
  & \multicolumn{1}{c|}{\textbf{Regions}}
  & \multicolumn{1}{c}{\textbf{VT}}
  & \multicolumn{1}{c}{\textbf{ET}}
  & \multicolumn{1}{c}{\textbf{Mem}}
  & \multicolumn{1}{c|}{\textbf{States}}
  & \multicolumn{1}{c}{\textbf{VT}}
  & \multicolumn{1}{c}{\textbf{ET}}
  & \multicolumn{1}{c}{\textbf{Mem}}
  & \multicolumn{1}{c}{\textbf{States}} \\
\midrule
\multicolumn{13}{c}{$\blacktriangleright$
{\normalsize \andor{}}~\cite{farkas2018towards} ::
[Query: $\exists \Diamond ( \mathit{false} ) $]
$\blacktriangleleft$} \\
\midrule
\rowcolor{white}
3  & 0.001 & 0.016 & 14.50 & 640
   & ${\boldsymbol \varepsilon}$ & \textbf{0.007} & \textbf{6.35} & \textbf{9}
   & $\varepsilon$ & 0.052 & 18.96 & 8 \\
\midrule
\multicolumn{13}{c}{$\blacktriangleright$
{\normalsize \exSITH{}}~\cite{farkas2018towards} ::
[Query: $\exists \Diamond ( A.\mathit{qBad} ) $]
$\blacktriangleleft$} \\
\midrule
\rowcolor{white}
1  & $\varepsilon$ & 0.013 & 8.10 & 225
   & ${\boldsymbol \varepsilon}$ & \textbf{0.006} & \textbf{5.87} & \textbf{4}
   & $\varepsilon$ & 0.054 & 18.79 & 4 \\
\midrule
\multicolumn{13}{c}{$\blacktriangleright$
{\normalsize \latch{}}~\cite{farkas2018towards} ::
[Query: $\exists \Diamond ( \mathit{Latch}_1.\mathit{LatchD1E1} ) $]
$\blacktriangleleft$} \\
\midrule
\rowcolor{white}
7  & 0.009 & 0.024 & 59.34 & 2025
   & ${\boldsymbol \varepsilon}$ & \textbf{0.008} & \textbf{6.36} & \textbf{7}
   & $\varepsilon$ & 0.056 & 19.03 & 7 \\
\midrule
\multicolumn{13}{c}{$\blacktriangleright$
{\normalsize \maler{}}~\cite{farkas2018towards} ::
[Query: $\exists \Diamond ( \mathit{Job}_1.\mathit{End}1 \land \mathit{Job}_2.\mathit{End}2 \land \mathit{Job}_3.\mathit{End}3 \land \mathit{Job}_4.\mathit{End}4 ) $]
$\blacktriangleleft$} \\
\midrule
\rowcolor{white}
4  & $\varepsilon$ & 0.014 & 8.38 & 114
   & ${\boldsymbol \varepsilon}$ & \textbf{0.008} & \textbf{6.35} & \textbf{75}
   & $\varepsilon$ & 0.056 & 19.11 & 75 \\
\midrule
\multicolumn{13}{c}{$\blacktriangleright$
{\normalsize \rcp{}}~\cite{farkas2018towards} ::
[Query: $\exists \Diamond ( \mathit{S1o.S1oEnd} ) $]
$\blacktriangleleft$} \\
\midrule
\rowcolor{white}
5  & 0.001 & 0.014 & 12.25 & 346
   & ${\boldsymbol \varepsilon}$ & \textbf{0.010} & \textbf{6.50} & \textbf{27}
   & 0.001 & 0.056 & 19.18 & 18 \\
\midrule
\multicolumn{13}{c}{$\blacktriangleright$
{\normalsize \simple{}}~\cite{jensen2023dynamic} ::
[Query: $\exists \Diamond ( \mathit{false} ) $]
$\blacktriangleleft$} \\
\midrule
\rowcolor{white}
1  & $\varepsilon$ & 0.013 & 7.64 & 56
   & ${\boldsymbol \varepsilon}$ & \textbf{0.006} & \textbf{5.81} & \textbf{3}
   & $\varepsilon$ & 0.053 & 18.76 & 3 \\
\rowcolor{lightgray}
1  & $\varepsilon$ & 0.013 & 7.92 & 428
   & ${\boldsymbol \varepsilon}$ & \textbf{0.006} & \textbf{5.90} & \textbf{3}
   & $\varepsilon$ & 0.054 & 18.76 & 3 \\
\rowcolor{white}
1  & 0.001 & 0.015 & 12.28 & 4028
   & \textbf{0.001} & \textbf{0.007} & \textbf{6.52} & \textbf{3}
   & 0.001 & 0.054 & 18.83 & 3 \\
\midrule
\multicolumn{13}{c}{$\blacktriangleright$
{\normalsize \soldiers{}}~\cite{farkas2018towards} ::
[Query: $\exists \Diamond ( E.\mathit{Escape} ) $]
$\blacktriangleleft$} \\
\midrule
\rowcolor{white}
5  & 0.128 & 0.146 & 494.43 & 24194
   & ${\boldsymbol \varepsilon}$ & \textbf{0.007} & \textbf{6.45} & \textbf{93}
   & $\varepsilon$ & 0.056 & 19.05 & 93 \\
\midrule
\multicolumn{13}{c}{$\blacktriangleright$
{\normalsize \srlatch{}}~\cite{farkas2018towards} ::
[Query: $\exists \Diamond ( \mathit{Env.envFinal} ) $]
$\blacktriangleleft$} \\
\midrule
\rowcolor{white}
3  & $\varepsilon$ & 0.013 & 7.62 & 9
   & ${\boldsymbol \varepsilon}$ & \textbf{0.007} & \textbf{6.24} & \textbf{5}
   & $\varepsilon$ & 0.052 & 19.05 & 4 \\
\bottomrule
\end{tabular}
}
\end{table}
}
}

\autoref{tab:punctual_full}, \autoref{tab:closed_ta_full}, \autoref{tab:big_constants_full}, and \autoref{tab:full_state_space_full} present the extended version of the benchmarks discussed in \autoref{sec:implementation_and_experimental_evaluation}.
In particular, \autoref{tab:full_state_space_full} includes an additional benchmark on the Lynch-Shavit protocol, similar to the Fischer case.

\textbf{Symmetry reduction}~\autoref{tab:symmetry} shows benchmarks in which the symmetry reduction optimization was enabled in \toolname{} and Uppaal (symmetry reduction cannot be manually controlled in TChecker); all symmetric \acp{ta} were identical.
Each $\mathit{Viking}_i$ had $20$ as its maximum constant.
The \train{} benchmark was executed using \ac{bfs}, as this exploration strategy yielded the best performance across all three tools.
Even though \toolname{} is not the fastest in \ac{vt}, symmetry reduction makes these benchmarks tractable with regions.  
Indeed, without this optimization, \toolname{} ran out of memory in \train{} for $K = 10, 14, 18, 22$, and in \bridge{} for $K = 17, 21$.  
Interestingly, the results suggest that TChecker does not apply specific optimizations in \train{}, but it does in \bridge{}.

\textbf{Additional benchmarks}~\autoref{tab:others} reports additional benchmarks that were conducted.  
These benchmarks include both small and medium-sized networks of \acp{ta}, as well as single \acp{ta}.  
The \exSITH{} and \simple{} benchmarks consist of a single \ac{ta}.
For \simple{}, the maximum constant was varied (10, 100, and 1000).
Apart from the \soldiers{} benchmark, \toolname{} performs comparably to both Uppaal and TChecker in terms of \ac{vt}, \ac{et}, and memory usage.
The query $\exists \Diamond ( \mathit{false} ) $ required to compute the entire reachable state space.
}
{\newpage
\appendix
\section{Artifact Evaluation}

We provide in this Appendix the details for the Artifact Evaluation Committee.
We also recall that the submitted paper is a \textbf{regular} paper.

\subsection{Badge claims}
We claim the following badges:
\begin{itemize}
    \item \emph{Artifact available}: our artifact has been uploaded to Zenodo~\cite{manini_2026_tarzan} and can be downloaded at: 
    \href{https://doi.org/10.5281/zenodo.18656202}{https://doi.org/10.5281/zenodo.18656202}
    \item \emph{Artifact functional}: the artifact allows to reproduce the experimental results presented in the paper. Additional details are given below.
\end{itemize}

\subsubsection{Functional outcomes}
The following functional outcomes are expected:

\textbf{F}$_1$: we first compared the performance of \toolname{} with Uppaal and TChecker in forward reachability.
The generated \texttt{benchmark\_results.pdf} file corresponds to \autoref{tab:punctual}, \autoref{tab:closed}, \autoref{tab:constants}, and \autoref{tab:safety}. 

\textbf{F}$_2$: we then compared forward and backward reachability in \toolname{}.
Generated files in the \texttt{backwards\_tarzan\_results} directory correspond to \autoref{tab:backward_scalability}.

\subsection{Quick start}
The artifact is a Docker image based on Ubuntu 24.04.
Both an ARM and an x86 version are available.
Be sure to run the right version based on your current architecture.
Once the right version is identified, it is possible to import it into your local Docker distribution using the following command:

\begin{center}
\begin{verbatim}
docker load -i <image_name>
\end{verbatim}
\end{center}
where \texttt{<image\_name>} is the name of the image you want to import.
Then, the ARM version can be run with the following command:
\begin{center}
\begin{verbatim}
docker run -it \
-v </path/to/your/shared/directory>:/artifact/output \
-w /artifact <image_name>
\end{verbatim}
\end{center}
while the x86 version with the following command:
\begin{center}
\begin{verbatim}
docker run --platform linux/amd64 -it \
-v </path/to/your/shared/directory>:/artifact/output \
-w /artifact <image_name>
\end{verbatim}
\end{center}
where \texttt{</path/to/your/shared/folder>} is a path to a directory on your machine where the output results will be saved, while \texttt{<image\_name>} is the name of the image you previously imported.
   
Once the Docker container is running, the initial tree directory structure should look like \autoref{fig:folder_structure}.
\begin{figure}[tb]
\dirtree{%
.1 artifact/.
.2 LICENSE.
.2 README.md.
.2 output.
.2 precomputed\_results.
.3 output.
.4 TARZAN.
.4 backwards\_tarzan\_results.
.4 benchmark\_summary.txt.
.4 benchmark\_tarzan\_results.
.4 benchmark\_tchecker\_results.
.4 benchmark\_uppaal\_results.
.4 latex\_results.
.3 running\_time.txt.
.2 scripts.
.3 run\_full.sh.
.3 run\_small.sh.
}
\caption{Artifact directory structure.}
\label{fig:folder_structure}
\end{figure}
In particular, the \texttt{output} directory will contain the output produced by the Artifact. 
Note that, once experiments are completed, this directory should have the same structure as the \texttt{output} directory located in \texttt{precomputed\_results}.
The \texttt{precomputed\_results} directory contains already executed experiments (to be used for comparison or in case some problems arise).
The \texttt{scripts} directory contains the scripts used to reproduce the experimental results.
For additional details on how to run the experiments, please refer to the \texttt{README.md} file.

The actual \toolname{} build can be found in the directory reachable via the following command:
\texttt{cd ../TARZAN}.
Everything is automated using the scripts contained in the \texttt{scripts} directory.
Hence, it is possible to ignore the \toolname{} build directory, but we encourage to take a look at it out of curiosity.

\subsection{Functional evaluation}

Both functional outcomes \textbf{F}$_1$ and \textbf{F}$_2$ can be fulfilled by using the \texttt{run\_small.sh} script.
This allows to reproduce \autoref{tab:punctual}, \autoref{tab:closed}, \autoref{tab:constants}, and \autoref{tab:safety} for \textbf{F}$_1$, and \autoref{tab:backward_scalability} for \textbf{F}$_2$.

The results of the artifact may vary from the ones reported in the paper depending on the physical resources (mostly memory) that are available.
Indeed, the different memory management between the Ubuntu Docker container and MacOS (on which the original experiments were conducted) may lead to out of memory events or timeouts that are not present in the paper's results.
For this reason, it is advisable to compare the reproduced results with the ones provided in the extended version of the paper, which contains also smaller instances of the experiments.
\todoAM{}{Citare extended version}}

\end{document}